\newtheorem{definition}{Definition}
\newtheorem{theorem}{Theorem}
\newtheorem{corollary}{Corollary}
\newtheorem{proposition}{Proposition}
\newcommand{\sys}{$\mathsf{GReDP}$}
\newcommand{\vect}[1]{\ensuremath{\mathbf{#1}}}
\begin{document}

\date{}

\title{\Large \bf Training with Differential Privacy: A Gradient-Preserving Noise Reduction Approach with Provable Security}

\author{
{\rm Haodi Wang}\\
City University of Hong Kong\\
Lab for AIFT
\and
{\rm Tangyu Jiang}\\
Tsinghua University
 \and
 {\rm Yu Guo}\\
Beijing Normal University
\and
{\rm Chengjun Cai}\\
City University of Hong Kong\\
(Dongguan)
\and
{\rm Cong Wang}\\
City University of Hong Kong
\and
{\rm Xiaohua Jia$^*$}\\
City University of Hong Kong
} 

\maketitle

\begin{abstract}
Deep learning models have been extensively adopted in various regions due to their ability to represent hierarchical features, which highly rely on the training set and procedures. 
Thus, protecting the training process and deep learning algorithms is paramount in privacy preservation. 
Although Differential Privacy (DP) as a powerful cryptographic primitive has achieved satisfying results in deep learning training, the existing schemes still fall short in preserving model utility, i.e., they either invoke a high noise scale or inevitably harm the original gradients.
To address the above issues, in this paper, we present a more robust and provably secure approach for differentially private training called \sys. 
Specifically, we compute the model gradients in the frequency domain and adopt a new approach to reduce the noise level. 
Unlike previous work, our \sys~only requires half of the noise scale compared to DPSGD \cite{dpsgd} while keeping all the gradient information intact. We present a detailed analysis of our method both theoretically and empirically. The experimental results show that our \sys~works consistently better than the baselines on all models and training settings. 
\end{abstract}


\section{Introduction}
Deep learning has been widely adopted in various applications due to its ability to represent hierarchical features. The performance of the deep learning models relies highly on the training data, which usually contain sensitive and private information. A plethora of techniques have proven that an adversary can extract or infer the training data via the model parameters and training procedures \cite{Wang_2023_CVPR, 10.1145/3538707, qian2}. Consequently, protecting deep learning algorithms is paramount in preserving data privacy. 

Meanwhile, Differential Privacy (DP), as one of the fundamental cryptographic tools, offers a robust mechanism to safeguard sensitive information while extracting meaningful insights. By injecting carefully calibrated noise into the data, DP ensures that statistical queries about a dataset remain indistinguishable whether any single data point is included or excluded. 
In the context of deep learning, implementing DP on models benefits by mitigating the risks of leaking information about individual data points. This approach not only enhances the security of deep learning applications but also fosters greater trust and compliance with privacy regulations in an increasingly data-driven world.

However, directly adopting DP to the well-trained model parameters would devastate the model utility. Due to the vulnerability of deep learning models, a slight change in the model parameters could cause a divergent output \cite{sun2021exploring}, which necessitates more sophisticated approaches to preserve privacy. In 2016, Abadi \emph{et al.} \cite{dpsgd} proposed DPSGD, which modifies the standard training procedure by incorporating DP into the gradient computation process. Instead of adding noise directly to the model parameters, DPSGD gradually introduces noise to the gradients during training. There are quantities of subsequent work established on DPSGD to enhance its performance \cite{10061732, WANG2023408, ren2}. Though effective, these works still suffer from the high utility loss of the deep learning models. 

To address the above issue, in 2023, Feng \emph{et al.} proposed an alternative to DPSGD called Spectral-DP \cite{spectral}. The authors point out that the utility loss in DPSGD and the variants derive from the direct gradient clipping and noise addition to the model. Building atop of the observation that the vectors have more sparse distributions in the transferred domain \cite{FFT1, xiao2010differential}, Spectral-DP designs a novel method to perform the low-bandwidth noise addition on the model gradients using Fast Fourier Transformation (FFT). 
The sparsity enables Spectral-DP to realize a noise reduction via a filtering operation. 
Although this method significantly enhances the model utility compared to the previous work, the noise reduction in Spectral-DP highly relies on a filtering ratio $\rho$. This hand-crafted hyperparameter results in an unstable noise level. Moreover, the reduction method in Spectral-DP inevitably discards parts of the original gradients along with the noise thus the performance can be further enhanced. 

Generally speaking, the model utility under different DP methods is determined by two primary factors: (i) how much noise is added to the model gradients, and (ii) how many original gradients are retained. Analyzed from this perspective, the existing schemes either invoke a high noise scale or might cause information loss during the algorithm, thus leading to an unsatisfying model accuracy. 

\noindent\textbf{Our Objectives}. The objective of this paper is to shed some light on the differentially private deep learning training. Our proposed algorithm should retain all the gradient information and add less noise than the current schemes, thus achieving better model utility than the baseline methods with a more comprehensive theoretical analysis. 

\noindent\textbf{Our Contributions.} To this end, we propose a novel and more robust DP approach for the training procedure of the deep learning models called \sys~(i.e., DP with \underline{G}radient-preserving noise \underline{Re}duction). 
Enlightened by \cite{FFT1, xiao2010differential}, we compute the model gradients in the frequency domain by adopting the FFT and Gaussian noise mechanism to the model parameters and feature maps. 
Since the model weights are more sparse in the transferred domain, it is viable for us to perform the noise reduction. Unlike the previous work that filters the noised gradients in the frequency domain \cite{spectral}, we instead process them in the time domain by deleting the imaginary parts. We theoretically prove that our \sys~requires only half of the noise level than DPSGD under the same privacy budget. Compared to Spectral-DP, our approach invokes lower and more stable noise levels without deleting any gradient information. 
We fully implement our \sys~and design comprehensive experiments to evaluate its performance. We adopt five universally used architectures to show the performance gain under two popular training settings. The experimental results demonstrate that \sys~achieves consistently better model utility than the existing baselines under the same privacy requirements.
%
%
In all, the contributions of this paper are as follows.

\begin{itemize}
    \item We theoretically analyze the noise scale needed to maintain a certain privacy requirement. By proposing a new gradient-preserving noise reduction scheme,
    we provide a tighter bound for differentially private training.
	
   \item We propose a new DP scheme for the deep learning training procedure called \sys. Compared to the existing works, our algorithm requires the lowest noise level without discarding any gradient information under the same privacy budget, thus leading to the best model utility.
	
   \item We extensively implement our algorithm and design comprehensive experiments to evaluate our method on popular models and multiple datasets. The experimental results show that our method achieves the highest model accuracy on \emph{all} settings under the same privacy budget. 
\end{itemize}


\section{Preliminaries}

%
We denote $\mathbb{N}$ the field of natural number. We use bold letters to denote vectors and matrices, respectively. For instance, $\vect{A}$ is a vector with entry $a_i$. We adopt $[1, N]$ to represent $\{1, 2, \cdots, N\}$.

\begin{figure*}
    \centering
    \includegraphics[scale=0.6]{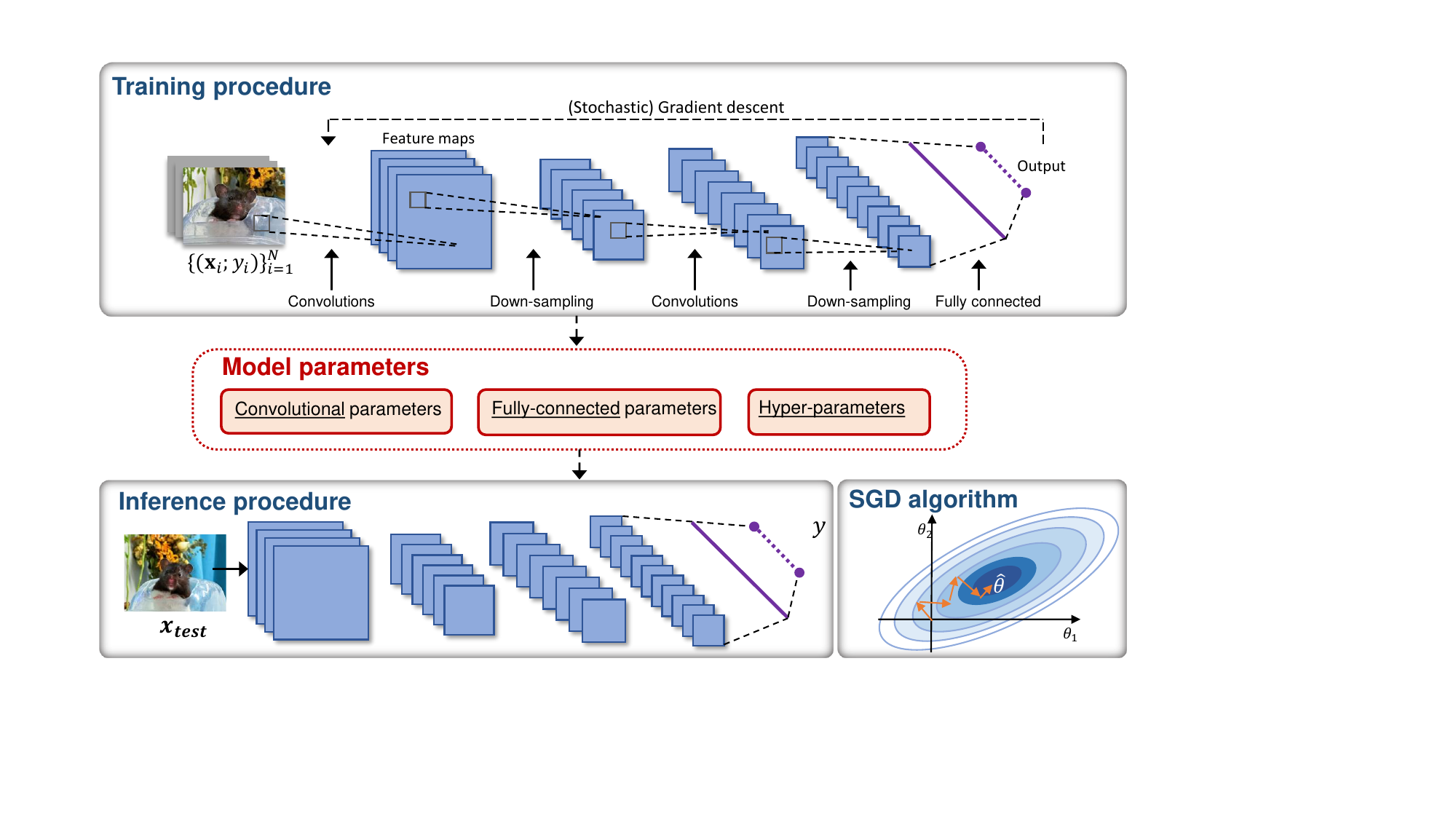}
    \caption{A typical deep learning procedure and SGD algorithm.}
    \label{fig:deeplearning}
\end{figure*}

\subsection{Differential Privacy}

DP provides mathematical privacy guarantees for specific algorithms by computing statistics about the inputs or algorithms. The goal of the original DP is to protect the membership information of each individual's data. Specifically, by observing the output, an adversary cannot distinguish whether a particular data is included in the dataset or not. As a critical primitive for privacy preservation, DP provides an essential measure to quantify privacy for various tasks. Formally speaking, for two collections of records $\mathbf{x}, \mathbf{y} \in \mathbb{N}^\chi$ from a universe $\chi$, the distance between $\mathbf{x}$ and $\mathbf{y}$ is defined such that

\begin{definition} (Distance Between Databases).
    The distance between two databases $\mathbf{x}$ and $\mathbf{y}$ is the $l_1$ distance between these two databases $\| \mathbf{x}-\mathbf{y}\|_1$, where the $l_1$ norm is:
    \begin{equation}
        \|\mathbf{x}\|_1 = \sum_{i=1}^{|\chi|} |\mathbf{x}_i|
    \end{equation}
\end{definition}

Note that $\| \mathbf{x}-\mathbf{y}\|_1$ evaluates the amount of records that differ between $\mathbf{x}$ and $\mathbf{y}$. Based on this definition, DP is defined as below. 

\begin{definition} (Differential Privacy)
    A randomized algorithm $\mathcal{M}$ with domain $\mathbb{N}^{|\chi|}$ is said to be $(\epsilon, 
    \delta)$-differentially private if for all $\mathcal{S} \subseteq \mathsf{Range}(\mathcal{M})$ and for all $\mathbf{x}, \mathbf{y} \in \mathbb{N}^{|\chi|}$ such that $\|\mathbf{x}-\mathbf{y}\|_1 \leq 1$:

    \begin{equation}
        \mathsf{Pr} \left[\mathcal{M}(\mathbf{x}) \in \mathcal{S} \right] \leq \mathsf{exp}(\epsilon) \mathsf{Pr} \left[ \mathcal{M}(\mathbf{y}) \in \mathcal{S}\right] + \delta
    \end{equation}
    where the probability space is over the coin flips of the mechanism $\mathcal{M}$.
\end{definition}

Note that if $\delta=0$, $\mathcal{M}$ is said to be $\epsilon$-differentially private. Generally speaking, $(\epsilon, \delta)$-DP ensures that for all adjacent $\mathbf{x}$ and $\mathbf{y}$, the absolute value of the privacy loss will be bounded by $\epsilon$ with probability at least $1-\delta$. 

In order to create algorithms that satisfy the definition of $(\epsilon, \delta)$-DP, one can exert the Gaussian mechanism to the target database. The noise level is determined by the Gaussian noise added to the database, which is controlled by a parameter $\sigma$. In particular, to approximate a deterministic function $f: \mathcal{M} \rightarrow \mathbb{R}$ with the $(\epsilon, \delta)$-DP is to add the Gaussian noise defined as

\begin{equation}\label{eq:gaussian}
    \mathcal{M}(\mathbf{x}) \triangleq f(\mathbf{x}) + \mathcal{N}(0, S^2 \cdot \sigma^2)
\end{equation}
where $\mathcal{N}(0, S^2 \cdot \sigma^2)$ is the Gaussian distribution with a mean value of 0 and standard deviation as $S^2\cdot \sigma^2$. $S$ denotes the sensitivity of function $f$. Based on the analysis of \cite{dwork2014algorithmic}, for any $\epsilon$ and $\delta \in (0, 1)$, the Gaussian mechanism satisfies $(\epsilon, \delta)$-DP if the following equation holds,

\begin{equation}
    \sigma = \frac{\sqrt{2\log (1.25/\delta)}}{\epsilon}
\end{equation}

\subsection{Deep Learning and Gradient Descent}

Deep learning models are extensively researched due to their ability in hierarchical learning. A typical deep learning structure consists of two main procedures, i.e., training and inference, as shown in Fig. \ref{fig:deeplearning}. For a specific task, the model owner uses its private dataset as the input of the deep learning model. The model parameters are initialized by randomness (i.e., training from scratch) or pre-trained model parameters (i.e., transfer learning or fine-tuning). In each iteration of the training procedure, the deep learning model updates its current parameters via backpropagation, which uses gradient descent algorithms to calculate the errors between the prediction results and data labels. 
Then the algorithm adjusts the weights and biases of each layer. The model parameters are fixed once the training procedure is completed, which is adopted by the inference phase using the testing dataset. 

Among all the gradient descent algorithms, the Stochastic Gradient Descent (SGD) approach is widely adopted due to its efficiency in finding the optimal parameters. Specifically, suppose the deep learning model adopts $\mathcal{L}(\theta)$ as the loss function with parameters $\theta$. In the backpropagation, the loss function over a batch of training samples $\{\mathbf{x}_1, \mathbf{x}_2, ..., \mathbf{x}_B\}$ is $\mathcal{L}(\theta) = \frac{1}{B}\Sigma_i \mathcal{L}(\theta, \mathbf{x}_i)$, where $B$ is the batch size. The optimizing goal is to minimize $\mathcal{L}(\theta)$ consistently. To this end, the model calculates the partial derivative $\nabla_\theta \mathcal{L}(\theta) := \frac{\partial \mathcal{L}}{\partial \theta}$ and updates the model parameters such that

\begin{equation}\label{eq:SGD}
    \theta_{i+1} = \theta_i - \eta \times \nabla_\theta \mathcal{L}(\theta)
\end{equation}
where $i$ indicates the training iteration, $\eta$ is the learning rate.

\subsection{Differential Privacy for Deep Learning}

There are different types of DP for deep learning models based on various privacy-preserving goals. In this paper, we adopt DP to the training process of the models. Consequently, the training procedure as well as the trainable parameters, will not leak the information of the training datasets. In particular, we require the algorithm that optimizes the deep learning model to satisfy $(\epsilon, \delta)$-DP w.r.t. the training data. To this end, the Gaussian noise is added to the clipped gradients in each iteration. The overall privacy-preserving level is determined by the privacy accountant mechanism introduced in \cite{renyi}. 
Specifically, we follow the definitions of R$\acute{e}$nyi Differential Privacy (RDP) \cite{renyi} used in the previous work \cite{dpsgd, spectral}, such that

\begin{definition} (R$\acute{e}$nyi Differential Privacy)\label{def:renyi}
    Suppose $\mathcal{M}$ is a randomized algorithm with domain $\mathcal{D}$, and $\mathcal{R} = \mathsf{Range}(\mathcal{M})$ is the range of $\mathcal{M}$. $\mathcal{M}$ is said to be $(\epsilon, \delta)$-RDP if for any two adjacent sets $d, d'\in \mathcal{D}$, the following holds
    \begin{equation}
        \mathsf{D}_{\alpha}(\mathcal{M}(d)\| \mathcal{M}(d')) \leq \epsilon
    \end{equation}
    where $\mathsf{D}_{\alpha}(\cdot \| \cdot)$ is the R$\acute{e}$nyi divergence between two probability distributions. 
\end{definition}

The RDP mechanism can be converted to a DP mechanism, which is described in the following proposition. 

\begin{proposition}\label{pro:concate}
    If $f$ is an $(\alpha, \epsilon)$-RDP mechanism, it equivalently satisfies $(\epsilon + \frac{\log 1/\delta}{\alpha-1}, \delta)$-DP for any $\delta \in (0, 1)$.
\end{proposition}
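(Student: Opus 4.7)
The plan is to use the standard conversion from RDP to DP by bounding the tail of the privacy loss random variable via Markov's inequality applied to the moment bound implied by the Renyi divergence.

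First, I would introduce the privacy loss random variable. For adjacent inputs $d, d' \in \mathcal{D}$ and an output $o$, define $L(o) = \log \frac{\mathsf{Pr}[f(d) = o]}{\mathsf{Pr}[f(d') = o]}$. By unpacking the definition of Renyi divergence in Definition \ref{def:renyi}, the RDP guarantee is equivalent to the moment bound
\begin{equation}
    \mathbb{E}_{o \sim f(d)} \left[ \exp\bigl((\alpha-1) L(o)\bigr) \right] \leq \exp\bigl((\alpha-1)\epsilon\bigr).
\end{equation}
This reformulation is the entire reason RDP is a convenient intermediate notion: it directly controls the exponential moments of the privacy loss.

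Second, I would apply Markov's inequality to this moment bound. For any threshold $\epsilon' > \epsilon$, writing $\mathsf{Pr}[L(o) \geq \epsilon'] = \mathsf{Pr}[\exp((\alpha-1)L(o)) \geq \exp((\alpha-1)\epsilon')]$ and using Markov gives
\begin{equation}
    \mathsf{Pr}_{o \sim f(d)}[L(o) \geq \epsilon'] \leq \exp\bigl((\alpha-1)(\epsilon - \epsilon')\bigr).
\end{equation}
Setting the right-hand side equal to $\delta$ and solving yields exactly $\epsilon' = \epsilon + \frac{\log(1/\delta)}{\alpha-1}$, which matches the target DP parameter in the statement.

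Third, I would convert this tail bound on the privacy loss into the $(\epsilon', \delta)$-DP inequality. For any measurable $\mathcal{S} \subseteq \mathsf{Range}(f)$, partition $\mathcal{S}$ into the ``typical'' part $\mathcal{S}_1 = \{o \in \mathcal{S} : L(o) \leq \epsilon'\}$ and the ``bad'' part $\mathcal{S}_2 = \mathcal{S} \setminus \mathcal{S}_1$. On $\mathcal{S}_1$, by definition of $L$, we have $\mathsf{Pr}[f(d) \in \mathcal{S}_1] \leq \exp(\epsilon') \mathsf{Pr}[f(d') \in \mathcal{S}_1]$. On $\mathcal{S}_2$, the tail bound from the previous step directly gives $\mathsf{Pr}[f(d) \in \mathcal{S}_2] \leq \delta$. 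Summing these two contributions yields the $(\epsilon', \delta)$-DP guarantee.

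The main obstacle is conceptual rather than computational: one must recognize that the RDP guarantee is really a statement about exponential moments of the privacy loss, and that Markov's inequality is the right tool to trade a moment bound for a tail bound. Once this viewpoint is in place, the final partition argument is routine. A secondary subtlety is ensuring the argument works uniformly over all measurable $\mathcal{S}$, which is why the tail bound must be stated on the privacy loss variable itself rather than for a fixed set.
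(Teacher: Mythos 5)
Your proof is correct. Note first that the paper does not actually prove Proposition~\ref{pro:concate}; it is imported verbatim from the R\'enyi DP framework of \cite{renyi}, so there is no in-paper argument to compare against. Your argument is the standard conversion: rewriting the order-$\alpha$ R\'enyi bound as the exponential-moment bound $\mathbb{E}_{o\sim f(d)}[\exp((\alpha-1)L(o))]\leq \exp((\alpha-1)\epsilon)$ is exactly the definition of $\mathsf{D}_\alpha$ unpacked with the expectation taken under $f(d)$, the Markov step correctly trades this for the tail bound $\mathsf{Pr}[L\geq \epsilon']\leq \exp((\alpha-1)(\epsilon-\epsilon'))$, and solving for $\delta$ gives precisely $\epsilon'=\epsilon+\frac{\log(1/\delta)}{\alpha-1}$; the final partition of $\mathcal{S}$ into typical and bad outputs is the routine way to pass from a pointwise tail bound to the $(\epsilon',\delta)$ event-level inequality. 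Two small points worth making explicit: you need $\alpha>1$ so that $x\mapsto \exp((\alpha-1)x)$ is increasing (otherwise the event $\{L\geq\epsilon'\}$ does not coincide with $\{\exp((\alpha-1)L)\geq\exp((\alpha-1)\epsilon')\}$), which is implicit in the definition of RDP but should be stated; and since DP quantifies over ordered adjacent pairs, you should note that the RDP hypothesis holds for both orderings of $(d,d')$, so the same argument applies symmetrically. For reference, Mironov's original proof of this proposition proceeds instead via a H\"older-type probability-preservation inequality, $\mathsf{Pr}[f(d)\in\mathcal{S}]\leq \bigl(e^{\epsilon}\,\mathsf{Pr}[f(d')\in\mathcal{S}]\bigr)^{(\alpha-1)/\alpha}$ followed by a case split on the magnitude of $\mathsf{Pr}[f(d')\in\mathcal{S}]$; your Markov-on-the-privacy-loss route is the one used in the moments-accountant literature and is equally valid, arguably more transparent, at the cost of handling the typical/bad partition explicitly.
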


\subsection{DPSGD and Spectral-DP}

Among all the previous proposed methods, DPSGD and Spectral-DP are the most relative ones to our \sys. We recall their conceptual foundations first to better demonstrate the design and differences of our scheme.

\noindent\textbf{DPSGD \cite{dpsgd}}. DPSGD, proposed in 2016 by Abadi \emph{et al.}, is the first work that proposed to craft the synergy of DP into the region of gradient descent algorithms. In contrast to directly adding noise to the well-trained model parameters, the algorithm in \cite{dpsgd} can preserve a better model utility under more relaxed security assumptions. The method of DPSGD is relatively straightforward. Specifically, in each iteration of the training, the algorithm clips the gradient and adds the Gaussian noise to the gradients, such that

\begin{equation}
    \vect{\tilde{g}}_t = \frac{1}{B}(\sum_i \vect{g}_t(\vect{x_i}) + \mathcal{N}(0, \sigma^2 S^2))
\end{equation}
where $B$ is the batch size and $\vect{g}$ is the weight gradients. It can be demonstrated that the noise is added to the clipped gradients directly in each iteration in the time domain. Nevertheless, DPSGD is proven to be able to maintain the model utility to a certain extent efficiently. 

\noindent\textbf{Spectral-DP \cite{spectral}.} 
Feng \emph{et al.} propose that the loss in model accuracy is due to the direct gradient clipping and noise addition. Hench the authors design an alternative of DPSGD called Spectral-DP \cite{spectral}.
In particular, Spectral-DP first pads the weight matrix and the convolutional kernels to the same size and performs the Hadamard product between these two matrices in the Fourier domain. The clipping and Gaussian noise adding are applied accordingly afterward. At the core of Spectral-DP lies a filtering operation, which sets a ratio of the gradients to zeros before converting the noised gradients back to the time domain. Compared to DPSGD, Spectral-DP achieves a lower noise scale. However, it relies on the filtering ratio and would inevitably delete the original gradients. We will further analyze this in Section \ref{subsec:method:theoretical}.


\section{Method}

In this section, we present the detailed designs of \sys.
We first describe our proposed approach in general and give detailed designs to adapt our method to the deep learning models. We then analyze the theoretical results of our \sys, to explain why \sys~effective and how the performance of our approach in theory. We also elaborate on the advantage of \sys~over the existing baseline methods \cite{dpsgd, spectral}. Finally, we present the complete training algorithm of our approach. 

\subsection{Our Approach}
\label{subsec:method:approach}

\subsubsection{Overview}
\label{subsubsec:method:approach:overview}

\begin{figure*}
    \centering
    \includegraphics[scale=0.5]{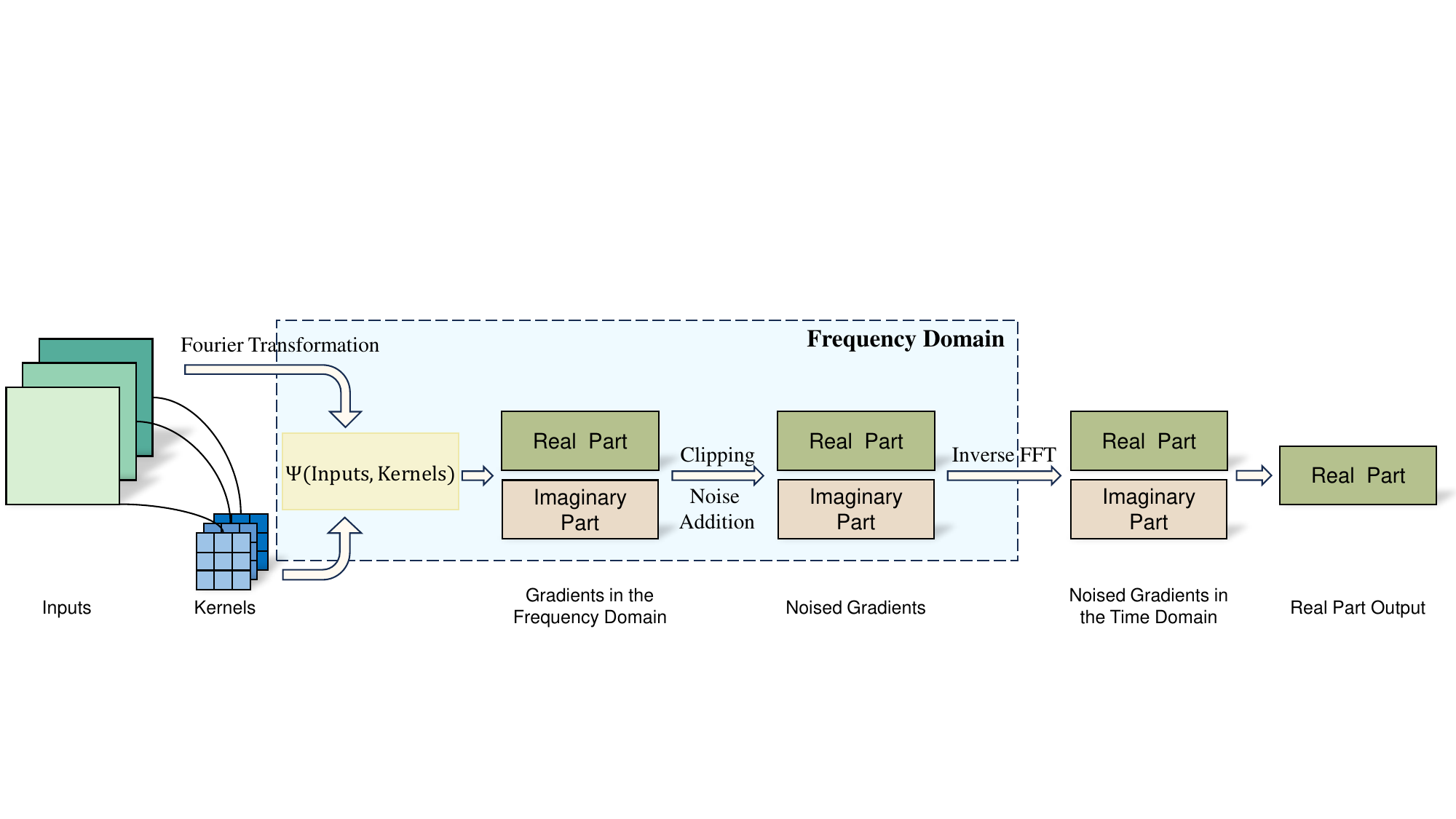}
    \caption{The overview of \sys.}
    \label{fig:overview}
\end{figure*}

The framework of \sys~is shown in Figure \ref{fig:overview}.
There are in total four steps contained in \sys, i.e., the Gradients Computation, Clipping and Noise Addition, Inverse Transformation, and Real Part Selection. In each round of the backpropagation, the gradients of each layer go through the entire procedure with noise added. 

\noindent\textbf{Gradients Computation}. The first step is to convert the gradient computation into the frequency domain. To this end, we adopt the Fourier transformation to the inputs and kernels respectively, and obtain their spectral representations. Without loss of generality, the operations in the time domain are then equivalent to an element-wise multiplication.
We denote $\mathbf{G} := \{g_1, g_2, \cdots, g_N\}$ the gradients computed in the frequency domain, such that 
%
\begin{equation}
    \mathbf{G} = \Psi (\mathcal{F}(\mathbf{X}), \mathcal{F}(\mathbf{W}))
\end{equation}
where $\mathbf{X}$ is the input matrix, $\mathbf{W}$ is the kernel, and $\Psi$ denotes the concrete operation determined by the layer type. $\mathcal{F}$ is the function of Fourier transformation such that

\begin{equation}
    \mathcal{F}(\mathbf{Y}): \hat{y}_i = \frac{1}{\sqrt{N}}\sum_{n=0}^{N-1} y_n e^{-j \frac{2\pi}{N}in}
\end{equation}
where $\hat{y}_i$ is the $i$-th term of $\mathcal{F}(\mathbf{Y})\in \mathbb{C}^N$, and $j$ is referred as the iota of the complex number.
We will defer the concrete implementation of $\Psi(\cdot, \cdot)$ for various layers in the deep learning models in the next subsection. 

\noindent\textbf{Clipping and Noise Addition}. 
The second step of \sys~is to clip the gradients and add noise to meet the privacy-preserving requirement. We follow the previous settings and inject noise layer-wise with the training procedure for better model utility. 
More concretely, we clip the $l_2$ norm of the gradients by a clipping parameter $c$ such that

\begin{equation}
    g'_i = \frac{g_i}{\max\{1, \frac{||\mathbf{G}||_2}{c} \}}
\end{equation}
where $i \in [1,N]$ and $||\cdot||_2$ is the $l_2$ normalization. The gradient clipping benefits our \sys~to control the noise sensitivity considering the scale of the noise to be added is proportional to the $l_2$ norm of the data sequence.
Then we inject the noise into the clipped gradients such that

\begin{equation}
    \hat{g}_i = g'_i + \tau_i
\end{equation}
where $\tau_i, i\in [1,N]$ is the noise drawn from $\mathcal{N}(0, \sigma^2 c^2)$ independently. 

\noindent\textbf{Inverse Transformation}. The next step is to convert the noised gradients in the frequency domain back into the time domain. To this end, we adopt the Inverse FFT (IFFT) $\mathcal{F}^{-1}(\cdot)$ to the noised gradients $\hat{\mathbf{G}}$, which is the inverse procedure of $\mathcal{F}(\cdot)$. More concretely, $\mathcal{F}^{-1}(\cdot)$ calculates the input sequence in the frequency domain such that

\begin{equation}
    \mathcal{F}^{-1}(\hat{\mathbf{Y}}): y_n = \frac{1}{\sqrt{N}}\sum_{i=0}^{N-1}\hat{y}_i \cdot e^{j\frac{2\pi}{N}in}
\end{equation}

We denote $\bar{\mathbf{G}}$ as the output of this step.

\noindent\textbf{Real Part Selection}. Finally, we proceed with the converted gradients in the time domain by deleting the imaginary part of the values and only retaining the real part. This procedure is the key operation of \sys, and also is the primary difference between our method and previous works. 
Specifically, since we add the noise in the frequency domain, the original real values (i.e., the weights and inputs in the time domain) might become complex numbers with imaginary parts. Thus, the results of IFFT also contain imaginary parts. For each of the value in $\bar{\mathbf{G}}$, we discard the imaginary value such that
\begin{equation}
    \mathcal{R}(\bar{\mathbf{G}}): \bar{g_i} := \bar{g}_i^\mathsf{r} + j\cdot \bar{g}_i^{\mathsf{im}} \rightarrow \tilde{g}_i := \bar{g}_i^\mathsf{r}
\end{equation} 
where $i \in [1,N]$. Note that the real part selection in our \sys~is the key mechanism that can effectively decrease the influence of the injected noise on the model utility. Moreover, the way that we only reserve the real value of the gradients can retain the entire gradient information than the existing approaches. We will analyze this in detail in the next subsection. 
The overall conceptual procedure of \sys~is shown in Algorithm \ref{alg:sys}. In the following sections, we denote $\mathsf{GRe}(\mathbf{G})$ as the procedure of inputting the gradients $\mathbf{G}$ and obtaining $\tilde{\mathbf{G}}$ as output.

\subsubsection{\sys~in Deep Learning}

Having the conceptual procedure of \sys, we can construct concrete ways to adapt our approach in the deep learning models (i.e., instantiate $\Psi$).
The key question is how to compute the model gradients in the frequency domain. 
To answer this, we construct the methods of \sys~for the convolutional layers and fully-connected layers, respectively.

\begin{algorithm}
		\caption{Overall algorithm of \sys} 
		\label{alg:sys}
		\begin{algorithmic}[1]
			\Require
			Kernel Weight $\mathbf{W}$, input matrix $\mathbf{X}$, $l_2$ sensitivity $c$, and noise scale $\sigma$.   
			\Ensure
			Vector $\tilde{\mathbf{G}}$ with noise. 
   \State Gradients Computation: Calculate gradients $\mathbf{G}$ using backpropagation in frequency domain.
   \State Clipping and Noise Addition: $\hat{\mathbf{G}}\leftarrow \frac{\mathbf{G}}{\max \{1,\|\mathbf{G}\|_2/c\}}+\mathcal{N}(0,c^2\sigma^2)$
   \State Inverse Transformation: $\bar{\mathbf{G}}=\mathcal{F}^{-1}(\hat{\mathbf{G}})$
   \State Real Part Selection: $\tilde{\mathbf{G}}=\mathcal{R}(\bar{\mathbf{G}})$
   \end{algorithmic}
   \end{algorithm}

\noindent\textbf{Convolutional Layers}.
The convolutional operations are the most fundamental computations in deep learning. To adopt \sys~to the convolutions, we first construct the method for the two-dimensional convolutions. More concretely, suppose the input of the convolutional layer is $\mathbf{X}\in \mathbb{R}^{h\times w \times c_{\mathsf{in}}}$, and the kernel matrix is denoted as $\mathbf{W}\in \mathbb{R}^{c_{\mathsf{in}} \times c_{\mathsf{out}} \times d \times d}$, where $h, w$ are the input size, $d$ is the kernel size, and $c_{\mathsf{in}}$ (\textit{resp.} $c_{\mathsf{out}}$) is the input (\textit{resp.} output) channel of the current layer.  
Then the computation of the 2D convolution is 
\begin{equation}
    \mathbf{o}_{i} = \sum_{j}^{c_{\mathsf{in}}} \mathbf{X}_j \circledast \mathbf{W}_{i,j}
\end{equation}
where $\mathbf{o}_i \in \mathbb{R}^{h_{\mathsf{out}}\times w_{\mathsf{out}}}$ is the feature map generated in the $i$-th channel of the convolution, $\mathbf{X}_j$ denotes the $j$-th channel of the input, $\mathbf{W}_{i,j}$ is the $(i, j)$-th kernel, and $\circledast$ is the convolutional operation. The backpropagation of the 2D convolution in the time domain is equivalent to the Hadarmard multiplication between the input matrix 
$\mathbf{X}$ and convolutional kernels $\mathbf{W}$:

\begin{equation}
    \frac{\partial \mathcal{L}}{\partial \mathbf{W}_{i,j}} = \frac{\partial \mathcal{L}}{\partial \mathbf{o}_{i}} \circledast \frac{\partial \mathbf{o}_{i}}{\partial \mathbf{W}_{i,j}} = \mathcal{F}^{-1}(\mathcal{F}(\frac{\partial \mathcal{L}}{\partial \mathbf{o}_{i}})\odot \mathcal{F}(\mathbf{X}_j))
\end{equation}
where $\circledast$ is the convolutional operation in the time domain and $\odot$ denotes the element-wise multiplication. $\frac{\partial\mathcal{L}}{\partial \mathbf{W}_{i,j}}$ is the gradient of the loss function over the model weights, $\frac{\partial\mathcal{L}}{\partial \mathbf{o}_{i}}$ is the gradient of the loss function over the feature map, and $\frac{\partial \mathbf{o}_{i}}{\partial \mathbf{W}_{i,j}}$ is the gradient of the feature map w.r.t. the kernel weight matrix. Thus, \sys~can be adapted to the 2D-convolutions 
such that

\begin{equation}
    \mathsf{GRe}(\mathbf{G}_{\mathsf{conv}}) = \mathsf{GRe}( \mathcal{F}(\frac{\partial \mathcal{L}}{\partial \mathbf{o}_{i,j}})\odot \mathcal{F}(\mathbf{X}_j))
\end{equation}

\noindent\textbf{Fully-Connected Layers}. Compared to the convolutional layers, applying \sys~to the fully-connected layers in the deep learning model is less straightforward. The primary reason is that there is no direct Fourier transformation for the fully-connected operations. To alleviate this issue, we adopt the 
circular weight matrix \cite{spectral, ding2017circnn, lin2018fft} and convert the weights in the fully connected layers to a circular shifted matrix. 
We briefly present the details for comprehensiveness. 

Specifically, for a block-circulant matrix $\mathbf{W}\in\mathbb{R}^{m\times n}$, which consists of circulant square sub-matrices $\mathbf{W}_{i,j}\in\mathbb{R}^{d\times d}$, the forward propagation is given as follows.

\begin{equation}
    \mathbf{O} = \mathbf{W}\mathbf{X} = 
    \left[\begin{array}{c}
    \mathbf{o}_1\\
    \mathbf{o}_2\\
    \cdots\\
    \mathbf{o}_p
    \end{array}\right]=
    \left[\begin{array}{c}
    \sum_{j=1}^q \mathbf{W}_{1, j}\mathbf{X}\\
    \sum_{j=1}^q \mathbf{W}_{2, j}\mathbf{X}\\
    \cdots\\
    \sum_{j=1}^q \mathbf{W}_{p, j}\mathbf{X}
    \end{array}\right]
\end{equation}
where $p=m\div d$, $q=n\div d$ and each subsequent row of $\mathbf{W}_{i,j}$ is a one-position circular shift of the previous one, such that

\begin{equation}
    \mathbf{W}_{i,j} = \left[\begin{array}{cccc}
       w_0  &  w_1 & \cdots & w_{d-1}\\
       w_{d-1}  &  w_0 & \cdots & w_{d-2}\\
       \cdots \\
       w_1  &  w_2 & \cdots & w_0 \\
    \end{array}
    \right]
\end{equation}

According to the circulant convolution theorem \cite{pan2012structured, bini2012polynomial}, $\mathbf{W}_{i,j}\mathbf{X}$ can be further equivalently represented as $\mathcal{F}^{-1}(\mathcal{F}(\mathbf{w}_{i,j})\odot\mathcal{F}(\mathbf{X}))$, where $\mathbf{w}_{i,j}$ is the first row vector of $\mathbf{W}_{i,j}$.
Then using the chain rule, the backpropagation of the fully connected layers can be derived such that

\begin{equation}\label{eq: fully-back}
\begin{split}
    \frac{\partial \mathcal{L}}{\partial \mathbf{w}_{i,j}} &= \sum_{l=1}^N \frac{\partial \mathcal{L}}{\partial \mathbf{o}_{i,l}}\frac{\partial \mathbf{o}_{i,l}}{\partial \mathbf{w}_{i,j}}
    =\frac{\partial \mathcal{L}}{\partial \mathbf{o}_i}\frac{\partial \mathbf{o}_i}{\partial \mathbf{w}_{i,j}}\\
    &=\mathcal{F}^{-1}(\mathcal{F}(\frac{\partial \mathcal{L}}{\partial \mathbf{o}_i})\odot \mathcal{F}(\frac{\partial \mathbf{o}_i}{\partial \mathbf{w}_{i,j}}))
    \end{split}
\end{equation}
where $\partial \mathbf{o}_i/\partial \mathbf{w}_{i,j}$ is block-circulant matrix.
Thus, the \sys~can be adopted to the fully connected layers with a block-circulant matrix such that 

\begin{equation}
    \mathsf{GRe}(\mathbf{G}_{\mathsf{fc}}) = \mathsf{GRe}( \mathcal{F}(\frac{\partial \mathcal{L}}{\partial \mathbf{o}_{i}})\odot \mathcal{F}(\frac{\partial \mathbf{o}_i}{\partial \mathbf{w}_{i,j}}))
\end{equation}

\subsection{Theoretical Analysis}
\label{subsec:method:theoretical}

In this subsection, we analyze the theoretical results of our proposed \sys. We first illustrate the advantage of \sys~over the existing methods, and prove the contributions in theory. Then, in Theorem \ref{theo:dpsingle} and Corollary \ref{coro:conv}, we analyze the noise level \sys~achieves.

At first glance, \sys~is similar to Spectral-DP \cite{spectral} since they all utilize the FFT technique to the model gradients and complete the Gaussian mechanism in the frequency domain. The primary differences between \sys~and Spectral-DP consist two points: 1) we first perform IFFT to the noised gradients and then ``filter'' the gradients in the time domain, while Spectral-DP first filters the gradients in the frequency domain and then adopts the IFFT; and 2) we perform the noise reduction by selecting the real part of the values while Spectral-DP filters the noised gradients by setting parts of the vectors to zeros with a filtering ratio $\rho$. 
Note that those differences are not marginal improvements to the previous work but provide much better privacy and performance gain for \sys. More concretely, we have the following key theorem. 

\begin{theorem}\label{theo:main}
    Suppose the noise is $\tau = \tau_x + i \cdot \tau_y$, where $\tau_x, \tau_y \sim \mathcal{N}(0, \frac{1}{2} S^2 \sigma^2)$. 
    Denote a sequence of values as a vector $\mathbf{V}=(V_1, \cdots, V_N)$,
    then the output $\mathsf{GRe}(\mathcal{F}(\mathbf{V}))$ satisfies:

    \begin{equation}
        \mathsf{GRe}(\mathcal{F}(\mathbf{V}))=\mathcal{R}(\mathcal{F}^{-1}(\mathcal{F}(\mathbf{V})+\tau))\sim \mathbf{V}+\mathcal{N}(0, \frac{1}{2} S^2\sigma^2)
    \end{equation}
\end{theorem}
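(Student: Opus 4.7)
The plan is to exploit linearity of the inverse Fourier transform and then do a direct variance calculation on the real part of a linear combination of independent Gaussians. First I would use the fact that $\mathbf{V}$ is real-valued and that $\mathcal{F}^{-1}\circ\mathcal{F}$ is the identity, so by linearity
\begin{equation}
\mathcal{F}^{-1}(\mathcal{F}(\mathbf{V})+\tau)=\mathbf{V}+\mathcal{F}^{-1}(\tau),
\end{equation}
and taking real parts gives $\mathcal{R}(\mathcal{F}^{-1}(\mathcal{F}(\mathbf{V})+\tau))=\mathbf{V}+\mathcal{R}(\mathcal{F}^{-1}(\tau))$. The whole theorem therefore reduces to identifying the distribution of the single noise term $\mathcal{R}(\mathcal{F}^{-1}(\tau))$.

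Next I would expand the inverse transform componentwise. Writing $\tau_k=\tau_{x,k}+j\tau_{y,k}$ with $\tau_{x,k},\tau_{y,k}\sim\mathcal{N}(0,\tfrac{1}{2}S^2\sigma^2)$ independently, and letting $\theta_{k,n}=\tfrac{2\pi kn}{N}$, a direct expansion of $e^{j\theta_{k,n}}=\cos\theta_{k,n}+j\sin\theta_{k,n}$ yields
\begin{equation}
\mathcal{R}\bigl((\mathcal{F}^{-1}(\tau))_n\bigr)=\frac{1}{\sqrt{N}}\sum_{k=0}^{N-1}\bigl(\tau_{x,k}\cos\theta_{k,n}-\tau_{y,k}\sin\theta_{k,n}\bigr).
\end{equation}
This is a linear combination of $2N$ mutually independent Gaussians, hence itself a zero-mean Gaussian. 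I would then compute its variance by summing squared coefficients, applying the identity $\cos^2\theta_{k,n}+\sin^2\theta_{k,n}=1$ for each $k$, which gives
\begin{equation}
\mathrm{Var}=\frac{1}{N}\sum_{k=0}^{N-1}\bigl(\cos^2\theta_{k,n}+\sin^2\theta_{k,n}\bigr)\cdot\tfrac{1}{2}S^2\sigma^2=\tfrac{1}{2}S^2\sigma^2,
\end{equation}
matching the claim. The normalising $1/\sqrt{N}$ in both $\mathcal{F}$ and $\mathcal{F}^{-1}$ is exactly what makes the $N$ cancel and produces the stated $\tfrac{1}{2}S^2\sigma^2$.

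The step I expect to require the most care is the bookkeeping that justifies why splitting the real-part sum into its $\cos$ and $\sin$ pieces preserves independence and gives the clean variance above: one has to be explicit that $\tau_{x,k}$ and $\tau_{y,k}$ are independent across both $k$ and across the two components, so that the variance of the weighted sum is simply the sum of the variances of $\tau_{x,k}\cos\theta_{k,n}$ and $\tau_{y,k}\sin\theta_{k,n}$, with no cross terms. Once that is established, the theorem follows immediately by combining the displayed identities, and the factor of $\tfrac12$ assigned to each of the real and imaginary components of $\tau$ is seen to be the precise choice that recovers a $\mathcal{N}(0,\tfrac{1}{2}S^2\sigma^2)$ perturbation of $\mathbf{V}$ in the time domain, exactly half the variance needed by a direct time-domain Gaussian mechanism of sensitivity $S$.
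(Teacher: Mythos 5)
Your proposal is correct and follows essentially the same route as the paper: linearity of $\mathcal{F}^{-1}$ reduces everything to the distribution of $\mathcal{R}(\mathcal{F}^{-1}(\tau))$, which is then identified as a zero-mean Gaussian by summing squared coefficients of independent Gaussian components. One small point in your favor: by pairing $\cos^2\theta_{k,n}+\sin^2\theta_{k,n}=1$ term by term (which works here because $\tau_x$ and $\tau_y$ share the same variance) you sidestep the separate evaluation of $\sum_k\cos^2$ and $\sum_k\sin^2$ that the paper's generalized argument relies on, and your sign $\tau_{x,k}\cos\theta-\tau_{y,k}\sin\theta$ for the real part is the correct one (the paper writes a $+$, which is immaterial by symmetry of the Gaussian but technically a typo).
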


\begin{proof}
    Let the vector as $\mathbf{V}=(V_1, \cdots, V_N)$, and $\mathbf{T}= \mathcal{F}(\mathbf{V})$, such that $T_j=\frac{1}{\sqrt{N}}\sum_{k=1}^N V_k\exp(-jki2\pi/N)$. Then the noise addition as described in Section \ref{subsec:method:approach} is $\hat{\mathbf{T}}=\mathbf{T}+\tau$. Thus, performing the IFFT on the noised vector in the frequency domain can be represented as 

    \begin{equation}
    \begin{split}
    \mathcal{F}^{-1}(\hat{\mathbf{T}})
			&=\mathcal{F}^{-1}(\mathbf{T})+\mathcal{F}^{-1}(\tau)=\mathbf{V}+\mathcal{F}^{-1}(\tau)
		\end{split}
    \end{equation}
    where $\mathcal{F}^{-1}(\tau)
    \sim
\mathcal{N}(0,\frac{1}{2} S^2\sigma^2)+i\mathcal{N}(0,\frac{1}{2} S^2\sigma^2)$. Thus,

\begin{equation}
    \mathsf{GRe}(\mathcal{F}(\mathbf{V})) = \mathbf{V}+\mathcal{R}(\mathcal{F}^{-1}(\tau))
    \sim \mathbf{V}+\mathcal{N}(0,\frac{1}{2} S^2 \sigma^2)
\end{equation}
which completes the proof.
\end{proof}

The results in Theorem \ref{theo:main} explain why \sys~is effective from two perspectives. Firstly, from the noise level perspective, this main theorem indicates that \textbf{\sys~can achieve the desired level of privacy requirement with lower injected Gaussian noise scale}, i.e., only half of the original standard deviation $S^2 \sigma^2$. On the contrary, DPSGD injects two times of the noise to the model gradients (i.e., $\tau\sim \mathcal{N}(0, S^2\sigma^2)$), while Spectral-DP adds noise following distribution $\mathcal{N}(0, \frac{K}{N}S^2 \sigma^2)$, where $N$ is the length of the gradient sequence and $\frac{K}{N}$ determines the fraction of coefficients that are retained in the filtering. Note that when $K=\frac{N}{2}$ (i.e., indicating the filtering ratio $\rho=0.5$), Spectral-DP has the same noise level as \sys~theoretically. However, when $K > \frac{N}{2}$ (i.e., $\rho>0.5$), Spectral-DP will add a higher scale of noise than ours. In real-world experiments, $\rho$ is a hyperparameter which differs with various model architectures. The evaluation also show that $\rho>0.5$ is a better choice under certain circumstance, making the noise level in Spectral-DP higher than ours. 
Secondly, from the gradient information perspective, Theorem \ref{theo:main} illustrates that \textbf{\sys~can retain all of the gradient information during the training procedure}. On the contrary, the filtering operation in Spectral-DP would cause the original gradients damaged. Specifically, the filtering is the key step in Spectral-DP which deletes a ratio (i.e., $\rho$) of noised gradients in the frequency domain. This procedure can effectively reduce the noise level thus improving the model utility. However, since the filtering operation is exerted on the values in the frequency domain, which are composed of the original gradients and the noise generated following the Gaussian mechanism. Thus, deleting parts of the vectors would remove both the essential (i.e., gradients) and non-essential (i.e., noise) information. In fact, Spcetral-DP managed to trade-off between reducing the noise as much as possible and preserving more gradients to enhance the model utility.
In contrast to Spectral-DP, \sys~keeps all the gradient information intact. It can be seen from Theorem \ref{theo:main} that $\mathsf{GRe}(\mathbf{V}) = \mathbf{V} + \mathcal{N}(0, \frac{1}{2}S^2 \sigma^2)$, indicating that all the original $\mathbf{V}$ is retained. The key observation is that we select a different way to perform the noise reduction. Since the noise addition occurred in the frequency domain, the imaginary part of the gradient vector \textit{contains only noise}. Therefore, we first perform the IFFT and preserve the real part to reduce the noise level. Consequently, our \sys~keeps all the original gradients which contributes to a higher accuracy. 

We can generalize the above results by relaxing the assumption of $\tau_x$ and $\tau_y$, and obtain the following theorem.

\begin{theorem}\label{theo:extendmain}
    Suppose the noise is $\tau = \tau_x + i \cdot \tau_y$. The real and imaginary part of $\tau$ satisfy: $\tau_x \sim \mathcal{N}(0, a^2 S^2 \sigma^2)$ and $\tau_y \sim \mathcal{N}(0, b^2 S^2 \sigma^2)$, where $\sqrt{a^2+b^2}=1$. Then the results in Theorem \ref{theo:main} still hold.
\end{theorem}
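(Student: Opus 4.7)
The plan is to mirror the proof of Theorem \ref{theo:main} step by step and merely re-track how the two variance parameters $a^{2}$ and $b^{2}$ combine through the inverse Fourier transform and the subsequent real-part extraction. Since $\mathcal{F}^{-1}$ is $\mathbb{C}$-linear, the chain
$$\mathcal{F}^{-1}(\hat{\mathbf{T}})=\mathcal{F}^{-1}(\mathbf{T})+\mathcal{F}^{-1}(\tau)=\mathbf{V}+\mathcal{F}^{-1}(\tau)$$
and the subsequent identity $\mathsf{FRe}(\mathcal{F}(\mathbf{V}))=\mathbf{V}+\mathcal{R}(\mathcal{F}^{-1}(\tau))$ from the original proof remain valid verbatim. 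The extended theorem therefore reduces to recomputing the distribution of $\mathcal{R}(\mathcal{F}^{-1}(\tau))$ under the relaxed assumption on $\tau$.

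First I would expand the inverse transform entry-wise and isolate the real part:
$$\mathrm{Re}(\mathcal{F}^{-1}(\tau)_{n})=\frac{1}{\sqrt{N}}\sum_{k=0}^{N-1}\bigl[\tau_{x,k}\cos(2\pi kn/N)-\tau_{y,k}\sin(2\pi kn/N)\bigr].$$
By independence of the Gaussian coordinates (both across $k$ and between $\tau_{x}$ and $\tau_{y}$), the variance at index $n$ splits as
$$\frac{S^{2}\sigma^{2}}{N}\Bigl[a^{2}\sum_{k=0}^{N-1}\cos^{2}(2\pi kn/N)+b^{2}\sum_{k=0}^{N-1}\sin^{2}(2\pi kn/N)\Bigr].$$

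Next I would invoke the orthogonality identities $\sum_{k=0}^{N-1}\cos^{2}(2\pi kn/N)=\sum_{k=0}^{N-1}\sin^{2}(2\pi kn/N)=N/2$ that hold at all generic frequency indices. Substituting them and using the hypothesis $a^{2}+b^{2}=1$ collapses the bracket to $\frac{N}{2}(a^{2}+b^{2})=\frac{N}{2}$, so the variance of $\mathrm{Re}(\mathcal{F}^{-1}(\tau)_{n})$ is $\frac{1}{2}S^{2}\sigma^{2}$, exactly as in Theorem \ref{theo:main}. Plugging this back into the linearity step above yields $\mathsf{FRe}(\mathcal{F}(\mathbf{V}))\sim \mathbf{V}+\mathcal{N}(0,\tfrac{1}{2}S^{2}\sigma^{2})$, which is the desired conclusion.

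The main obstacle I anticipate is the pair of degenerate indices $n=0$ and, when $N$ is even, $n=N/2$, at which $\sin(2\pi kn/N)\equiv 0$ and $\sum_{k}\cos^{2}(2\pi kn/N)=N$, so the naive version of the computation above produces variance $a^{2}S^{2}\sigma^{2}$ instead of $\tfrac{1}{2}S^{2}\sigma^{2}$. I would address this either by remarking that these DC/Nyquist bins form a vanishing fraction of the spectrum and therefore do not alter the aggregate noise level the training procedure experiences, or by coupling the argument with the Hermitian-symmetry constraint of the discrete Fourier transform of a real-valued signal, which forces the imaginary parts at those bins to zero and makes $a^{2}+b^{2}=1$ the only algebraic identity required. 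Since the proof otherwise invokes only the linearity of $\mathcal{F}^{-1}$ and the orthogonality of the Fourier basis, no structural departure from the proof of Theorem \ref{theo:main} is needed.
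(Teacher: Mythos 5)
Your proposal follows essentially the same route as the paper's own proof: linearity of $\mathcal{F}^{-1}$ to isolate $\mathcal{R}(\mathcal{F}^{-1}(\tau))$, expansion into cosine and sine components, independence of the Gaussian coordinates, and the identities $\sum_k\cos^2=\sum_k\sin^2=N/2$ combined with $a^2+b^2=1$ to obtain variance $\tfrac{1}{2}S^2\sigma^2$. In fact you are more careful than the paper, which silently assumes the $N/2$ identities at every index: your remark about the DC and Nyquist bins (where the variance degenerates to $a^2S^2\sigma^2$) identifies a genuine edge case the published argument glosses over, and either of your proposed fixes would patch it.
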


\noindent\textit{Proof Sketch}. The proof for Theorem \ref{theo:extendmain} is similar to Theorem \ref{theo:main}. We present the complete proof in Appendix \ref{app:proof_theo_2}.

The above theoretical results demonstrate that our scheme retains all the original gradient information while requiring only half of the noise level in the previous work, thus guaranteeing the performance. 
We then present the following theorem to analyze the DP level adopted in \sys. 

\begin{theorem}\label{theo:dpsingle}
    In Algorithm 1 (i.e., corresponding to our method overview), the output $\mathsf{GRe}(\mathbf{G})$ is $(\epsilon, \delta)$ differentially private if we choose $\sigma$ to be $\sqrt{2\log (1.25/\delta)}/\epsilon$.    
\end{theorem}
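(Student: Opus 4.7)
The plan is to decompose Algorithm 1 into one privacy-critical step (the Gaussian noise addition in the frequency domain) and purely post-processing operations (IFFT and real-part selection), then invoke the standard Gaussian mechanism guarantee recalled after Equation (3) together with the post-processing immunity of differential privacy. Under this decomposition, the theorem reduces to checking two facts: the $l_2$ sensitivity of the clipped frequency-domain gradient is at most $c$, and Steps 3--4 use no further access to the sensitive training data.

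First I would verify the sensitivity. After the clipping in Step 2, the vector $\mathbf{G}/\max\{1, \|\mathbf{G}\|_2/c\}$ has $l_2$ norm at most $c$, so on two neighboring datasets (differing in one record), the contribution of that single record to the aggregated clipped frequency-domain gradient differs by at most $c$ in $l_2$ norm, giving $l_2$ sensitivity $S \leq c$. The Gaussian noise injected in Step 2 has per-coordinate variance $c^2\sigma^2 = S^2\sigma^2$, which matches exactly the Gaussian mechanism recipe of Equation (3). Invoking the standard bound recalled directly afterwards in the paper, choosing $\sigma = \sqrt{2\log(1.25/\delta)}/\epsilon$ ensures that the intermediate output $\hat{\mathbf{G}}$ is $(\epsilon, \delta)$-differentially private.

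Second, Step 3 (applying $\mathcal{F}^{-1}$) and Step 4 (taking the real part via $\mathcal{R}$) are deterministic maps whose inputs are only $\hat{\mathbf{G}}$; they do not consult $\mathbf{X}$ or any other sensitive quantity. Therefore by the standard post-processing property of differential privacy, $\tilde{\mathbf{G}} = \mathcal{R}(\mathcal{F}^{-1}(\hat{\mathbf{G}})) = \mathsf{FRe}(\mathbf{G})$ inherits the $(\epsilon, \delta)$-DP guarantee of $\hat{\mathbf{G}}$, completing the proof.

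The argument is essentially an assembly of textbook DP facts, so I do not expect a substantive mathematical obstacle; the one point worth being explicit about is that the noise-variance reduction established in Theorem \ref{theo:main} should not be double-counted here. That reduction manifests in the distribution of $\tilde{\mathbf{G}}$ as a \emph{utility} improvement (effective additive noise $\mathcal{N}(0, \tfrac{1}{2} S^2\sigma^2)$ on the recovered signal $\mathbf{V}$) rather than a privacy bonus, because post-processing cannot tighten an $(\epsilon, \delta)$-DP guarantee. Thus the cleanest write-up is to isolate Step 2 as the privacy mechanism, apply Gaussian-mechanism + post-processing, and treat Theorem \ref{theo:main} as a separate, orthogonal statement about the utility of the released estimator.
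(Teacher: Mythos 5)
Your proposal is correct and follows essentially the same route as the paper: isolate the frequency-domain Gaussian noise addition as the privacy mechanism, invoke the standard Gaussian-mechanism bound (Theorem 3.22 of Dwork--Roth), and absorb the IFFT and real-part selection as post-processing. Your added remarks --- explicitly checking that clipping bounds the $l_2$ sensitivity by $c$, and cautioning that the variance reduction of Theorem~\ref{theo:main} is a utility gain that cannot be double-counted as a privacy bonus --- are sound and make the argument slightly more complete than the paper's version, but do not change its substance.
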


\noindent\textit{Proof Sketch.}
The proof of the above theorem derives from Theorem 3.22 in \cite{dwork2014algorithmic} and the post-processing property of the DP algorithm. One of the most essential observations is that both the $\mathsf{IFFT}$ and $\mathsf{Re}$ operations are post-processing, and will not affect the DP level. We present the details in Appendix \ref{app:proof_theo_3}.


Furthermore, to demonstrate the noise level of converting the two-dimensional convolutions to the element-wise multiplications in the frequency domain, we present the following corollary. 

\begin{corollary}\label{coro:conv}
    Suppose $\Gamma =\{\gamma_{m,n}\}_{m,n \in [0,N-1]}$ is a matrix of noise vector in frequency domain where $\gamma_{m,n} \sim \mathcal{N}(0, \sigma^2 S^2)$. Then the elements in $\mathcal{R}(\mathcal{F}^{-1}(\Gamma))$ follow a normal distribution $\mathcal{N}(0, \frac{1}{2}\sigma^2 S^2)$.
\end{corollary}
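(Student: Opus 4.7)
The plan is to adapt the proof of Theorem \ref{theo:main} from the one-dimensional setting to the two-dimensional setting, since the corollary is essentially its 2D analogue tailored to convolutional layers. The statement implicitly treats $\gamma_{m,n}$ as a complex-valued noise whose real and imaginary parts are independent Gaussians with variance $\tfrac{1}{2}\sigma^2 S^2$ each (matching the assumption in Theorem \ref{theo:main}); I will make this decomposition $\gamma_{m,n}=x_{m,n}+i\,y_{m,n}$ explicit at the outset.

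First I would write out the 2D inverse Fourier transform in components,
\begin{equation*}
[\mathcal{F}^{-1}(\Gamma)]_{p,q}=\frac{1}{N}\sum_{m=0}^{N-1}\sum_{n=0}^{N-1}\gamma_{m,n}\,e^{j\frac{2\pi}{N}(mp+nq)},
\end{equation*}
so that after expanding the complex exponential as cosine plus $i$-sine and substituting $\gamma_{m,n}=x_{m,n}+i y_{m,n}$, the real part of each entry becomes a linear combination of the independent Gaussians $x_{m,n}$ and $y_{m,n}$ with the weights $\tfrac{1}{N}\cos\theta_{m,n,p,q}$ and $-\tfrac{1}{N}\sin\theta_{m,n,p,q}$, where $\theta_{m,n,p,q}=\frac{2\pi}{N}(mp+nq)$. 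Independence of the $N^2$ complex noise entries then lets me conclude that $\mathcal{R}([\mathcal{F}^{-1}(\Gamma)]_{p,q})$ is itself Gaussian with mean zero.

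Next I would compute the variance by summing $\tfrac{1}{2}\sigma^2 S^2\cdot\tfrac{1}{N^2}(\cos^2\theta_{m,n,p,q}+\sin^2\theta_{m,n,p,q})$ over all $m,n$, which collapses by the Pythagorean identity to $\tfrac{1}{N^2}\cdot N^2\cdot\tfrac{1}{2}\sigma^2 S^2=\tfrac{1}{2}\sigma^2 S^2$. This is the 2D counterpart of the $\sum_k\cos^2+\sum_k\sin^2=N$ identity used in the proof of Theorem \ref{theo:extendmain}, so the argument is structurally identical and I can either invoke that theorem twice (once along each axis using the separability $\mathcal{F}_{2D}^{-1}=\mathcal{F}^{-1}_{\text{row}}\circ\mathcal{F}^{-1}_{\text{col}}$) or redo the direct calculation in one shot.

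The only real subtlety is bookkeeping: making sure the normalization constant $1/N$ for the 2D IFFT is used consistently (rather than $1/\sqrt{N}$ as in the 1D case), and clarifying that the ``noise vector in frequency domain'' is complex-valued with the $(a,b)=(1/\sqrt{2},1/\sqrt{2})$ symmetric split from Theorem \ref{theo:extendmain}. Once those conventions are fixed, no further obstacle remains and the Pythagorean cancellation does all the work.
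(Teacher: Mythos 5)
Your proposal is correct and follows the same overall skeleton as the paper's proof: expand the 2D inverse transform entrywise, split the complex exponential into cosine and sine, and compute the variance of the real part as a sum of squared weights that collapses to $\tfrac{1}{2}\sigma^2 S^2$. The one substantive difference is how you read the hypothesis. The paper takes $\gamma_{m,n}\sim\mathcal{N}(0,\sigma^2 S^2)$ at face value as a \emph{real} Gaussian, so the real part of $[\mathcal{F}^{-1}(\Gamma)]_{k,l}$ is $\frac{1}{N}\sum_{m,n}\gamma_{m,n}\cos\theta_{m,n}$ and the factor of $\tfrac12$ comes from the identity $\sum_{m,n}\cos^2\theta_{m,n}=N^2/2$. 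You instead decompose $\gamma_{m,n}=x_{m,n}+i\,y_{m,n}$ with each part of variance $\tfrac12\sigma^2S^2$, matching the complex-noise model of Theorems \ref{theo:main} and \ref{theo:extendmain}, so the collapse comes from the pointwise Pythagorean identity $\cos^2\theta+\sin^2\theta=1$. Your reading is more consistent with how the noise is actually injected in the scheme (the Gaussian mechanism is applied in the frequency domain, where values are complex), and it buys a small technical advantage: the pointwise identity holds for every output index $(p,q)$, whereas the averaging identity $\sum\cos^2=N^2/2$ that the paper relies on fails at degenerate frequencies such as $(k,l)=(0,0)$, where all cosines equal $1$ and the paper's computation would give variance $\sigma^2S^2$ rather than $\tfrac12\sigma^2S^2$. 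So your version is not only valid but slightly more robust; just be sure to state the complex-noise convention explicitly, since the corollary as written is ambiguous on this point.
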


The proof for Corollary \ref{coro:conv} can be found in Appendix \ref{app:proof_coro_1}.  

\subsection{Training Algorithm}

We present the complete training algorithm of \sys~in Algorithm \ref{alg:trainingalgorithm}. In each iteration, the algorithm takes a random batch of data with sampling probability $B/N$ and calculates the gradients in the last layer. The gradients are then processed as described before and backpropagated to the previous layer. All the parameters are updated based on the learning rate $\eta$ and noised gradients. 
To demonstrate the overall DP achieved by our scheme, we present the following corollary based on the RDP privacy accountant.

\begin{corollary}\label{coro:train}
	The DL training algorithm we proposed achieves $((T_e * N/B)\epsilon + \frac{\log(1/\delta)}{\alpha-1}, \delta)$-DP if $\sigma = \sqrt{2\log(1.25/\delta)}/\epsilon'$, where $\epsilon' = \epsilon + \frac{\log(1/\delta)}{\alpha-1}$.  
\end{corollary}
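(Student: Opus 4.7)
The plan is a standard three-step Rényi differential privacy (RDP) argument: establish a per-iteration Rényi guarantee, compose linearly over all iterations, then convert back to $(\epsilon,\delta)$-DP. First I would argue that one outer iteration of Algorithm \ref{alg:trainingalgorithm} is $(\alpha,\epsilon)$-RDP at the chosen order $\alpha>1$. The $l_2$-clipping on line 7 bounds the per-sample sensitivity of the (pre-noise) gradient by $c$, and the calibration $\sigma=\sqrt{2\log(1.24/\delta)/\epsilon'}$ with $\epsilon'=\epsilon+\log(1/\delta)/(\alpha-1)$ is precisely the one that, via Theorem \ref{theo:dpsingle} read through the Rényi lens of \cite{renyi}, makes the clipped-and-noised frequency-domain gradient an $(\alpha,\epsilon)$-RDP release. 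The IFFT on line 16 and the real-part selection on line 17 are data-independent post-processing, so by the post-processing property of RDP they do not degrade the guarantee; the layerwise treatment inside a single iteration is absorbed in the same bound because the noise is calibrated independently per layer under a common sensitivity budget.

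Second, by the linearity of RDP composition, executing $T_e\cdot N/B$ such iterations yields $(\alpha,\,(T_e\cdot N/B)\epsilon)$-RDP for the entire sequence of intermediate weights produced by the training loop. Finally, applying Proposition \ref{pro:concate} to this composed RDP bound immediately gives $((T_e\cdot N/B)\epsilon+\log(1/\delta)/(\alpha-1),\,\delta)$-DP, which is exactly the claim. I would deliberately not invoke privacy amplification by subsampling (even though batches are drawn with rate $B/N$), since the statement is the looser unamplified bound and the cleaner argument suffices.

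The main obstacle is accounting cleanly for the nested per-sample, per-layer noise additions inside a single iteration: independent Gaussian noise is added to every $\hat{\mathbf{G}}^l_i$ before the batch sum on line 15, so one must verify that neither the summation nor the later frequency-to-time transformations inflates the effective noise scale relative to what the clipping budget permits. This is essentially another post-processing argument, but it must be stated carefully so that the $(\alpha,\epsilon)$-RDP of step one genuinely applies to the full inner loop and not merely to a single gradient release. A secondary subtlety is that $\alpha$ is a free parameter in the statement; one would normally optimize over $\alpha$ at the end to tighten the effective privacy loss, but the corollary leaves this trade-off implicit, so no further optimization is required in the proof itself.
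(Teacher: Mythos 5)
Your proposal follows essentially the same route as the paper's proof: establish the per-iteration guarantee from the calibrated Gaussian mechanism (with IFFT and real-part selection handled as post-processing), compose linearly over the $T_e\cdot N/B$ iterations in the RDP framework, and convert back to $(\epsilon,\delta)$-DP via Proposition~\ref{pro:concate}. The paper's version is terser (it does not spell out the per-layer/per-sample accounting you flag as an obstacle, nor the decision to forgo subsampling amplification), but the underlying argument is the same.
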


\noindent\textit{Proof Sketch}. The proof of the above corollary relies on the $(\alpha, \epsilon)$-RDP Proposition. We present the complete proof in Appendix \ref{app:proof_coro_2}.

\begin{algorithm}
		\caption{Training algorithm of \sys} 
		\label{alg:trainingalgorithm}
		\begin{algorithmic}[1]
			\Require
			Training data: $\{(x_i, y_i)\}_{i=1}^{N}$, model parameters: $\mathbf{W}$, model layers: $L$, loss function: $\mathcal{L}$, learning rate: $\eta$, batch size: B, $l_2$ sensitivity $c$, and noise scale $\sigma$, total training epochs: T.   
			\Ensure
			Model with weight $\hat{\mathbf{W}}_T$. 
   \State Initialize $\mathbf{W}_0$ randomly
   \For{$t\in[1, T*N/B]$}
   \State Take a random sample $B_t$ with sampling probability $B/N$
   \For{$i \in [1, B]$}
   \For{$l \in [L, 1$]}
   \State Calculate the gradient of each sample in the $l$-th layer in the frequency domain and get $\mathbf{G}_i^l$
   \State Clipping and noise addition: $\hat{\mathbf{G}}^l_i\leftarrow \frac{\mathbf{G}_i^l}{\max(1, \|\mathbf{G}_i^l/c\|_2)}+\mathcal{N}(0,c^2\sigma^2)$
   \EndFor
      \State \textbf{end for}
   \EndFor
      \State \textbf{end for}
   \State \textbf{Gradient descent}
   \For{$l\in[1, L]$}
   \State $\hat{\mathbf{G}}^l\leftarrow\sum_{i=1}^B\hat{\mathbf{G}}^l_i$
      \State Inverse Transformation: $\bar{\mathbf{G}}^l\leftarrow\mathcal{F}^{-1}(\hat{\mathbf{G}}^l)$
   \State imaginary Part Deletion: $\tilde{\mathbf{G}}^l\leftarrow\mathcal{R}(\bar{\mathbf{G}}^l)$
   \State Update Parameters: $\mathbf{W}^l_{t}=\mathbf{W}^l_{t-1}-\eta \frac{1}{B}\tilde{\mathbf{G}}^l$

   \EndFor
      \State \textbf{end for}
   \EndFor 
      \State \textbf{end for}
   \end{algorithmic}
   \end{algorithm}

\section{Experiments}

In this section, we would like to answer the following key questions:

\begin{itemize}
    \item How is the performance of \sys~on various models when training the models from scratch? (Section \ref{subsec:exp:scratch})

    \item How is the performance of \sys~on different models when training in the transfer settings? (Section \ref{subsec:exp:transfer})

    \item What is the influence of the hyperparameters on the performance of \sys? (Section \ref{subsec:exp:ablation})
\end{itemize}

We will first describe the experimental configurations and then present the concrete results and analysis. 

\subsection{Experimental Configurations}


\noindent\textbf{Hardware.} We fully implement our \sys~in Python programming language. We evaluate \sys~on a desktop with an Inter Core i7-12700F CPU and GeForce RTX 3090. 

\noindent\textbf{Software.} We implement all the evaluating models using Python and the PyTorch platform. All the operations in the deep learning models (i.e., the convolutional and fully connected layers) are implemented based on \cite{spectral}.

\noindent\textbf{Evaluation Metrics.} To better conduct the comparison of our \sys~with previous works, we design two types of experiments for our scheme: 1) training the models from random initialization (i.e., training from scratch), and 2) training the models from pre-trained parameters (i.e., transfer learning). We adopt the final model accuracy in the inference procedure as the main evaluation metric to compare the model utility under the same privacy requirements. 


\noindent\textbf{Datasets.} We adopt three widely-used public datasets to evaluate the performance of \sys, i.e., MNIST \cite{mnist}, CIFAR-10, and CIFAR-100\cite{cifar}. MNIST is a handwritten digital dataset that contains 60,000 images for training and 10,000 images for testing. The image size is $28\time 28$. 
CIFAR-10 is an RGB image dataset with 50,000 training data and 10,000 testing data. The input size is $32 \times 32$ and there are 10 classes contained. 
CIFAR-100 is an RGB image dataset for 100 classes, which includes 500 training and 100 testing data for each class. The input size is also $32 \times 32$. 
We select these datasets according to the concrete deep learning model adopted, whose detailed structure will be demonstrated later.

\noindent\textbf{Counterpart Comparison.} We adopt DPSGD \cite{dpsgd} and Spectral-DP \cite{spectral} as our primary baseline methods. For DPSGD, we use their public implementation to evaluate our settings, which is established on the $\mathsf{opacus}$ \cite{opacus} tool. All the operations are realized through the standard $\mathsf{nn.torch}$ library.
Since the codes are not open-sourced for Spectral-DP \cite{spectral}, we implemented the method based on the information provided. 
%
%

Our code is open-sourced and available at

\begin{center}
		\href{https://anonymous.4open.science/r/FReDP-4CB2}{https://anonymous.4open.science/r/FReDP-4CB2}
\end{center}

\subsection{Training From Scratch}
\label{subsec:exp:scratch}

We first present the performance of \sys~and comparisons under the training from scratch settings. \textbf{Overall, \sys~achieves the best model utility among all the baselines on all deep learning models and settings. }
Specifically, Table \ref{tab:acc_scratch} demonstrates the results of \sys~and baseline methods at a glance. It can be observed that \sys~outperforms DPSGD and Spectral-DP in terms of model accuracy under the same experimental settings. For instance, our approach achieves 96.5\% model accuracy on LeNet-5 and MNIST dataset, which is 3.8\% and 1.2\% higher than Spectral-DP and DPSGD, respectively, when $\epsilon=1$ and $\delta=10^{-5}$. The performance gain becomes more significant on the ResNet-20, Model-3, and AlexNet models with CIFAR-10 datasets. More concretely, \sys~achieves 4.41\% to 4.67\% higher model accuracy than Spectral-DP and over 21.4\% to 36.85\% better utility than DPSGD. Note that Spectral-DP is currently the state-of-the-art method in the literature. In the following section, we will respectively describe the performance of \sys~on each deep learning model in details.   

\begin{table}[htbp]
  \caption{Comparisons of accuracy (\%) between \sys~and baselines methods in training from scratch. }
    \label{tab:acc_scratch}
\centering
\resizebox{\columnwidth}{!}{
	\begin{tabular}{l|c|c|c|c}
		\toprule
           Dataset & MNIST & \multicolumn{3}{|c}{CIFAR-10} \\
           \midrule
           Models & LeNet-5 & ResNet-20 & Model-3 & AlexNet \\
            \midrule
            Privacy budget & $(1, 10^{-5})$  & $(1.5, 10^{-5})$ & $(2, 10^{-5})$ & $(2, 10^{-5})$\\
            Plain model & 98.72 & 70.44 & 79.17 & 64.58\\
            DP-SGD \cite{dpsgd} & 92.7 & 29.3 & 46.47 & 20.54\\
            Spectral-DP \cite{spectral} & 95.3 & 54.9 & 63.20 & 52.92\\
            \midrule
            \sys~\textbf{(Ours)} & \textbf{96.5} & \textbf{59.31} & \textbf{67.87} & \textbf{57.39}\\
			\bottomrule
		\end{tabular}
  }
\end{table}

\subsubsection{\sys~on LeNet-5 and MNIST}

LeNet-5 is one of the most fundamental model architectures in deep learning \cite{lenet}. It was first proposed in 1994 and significantly encouraged the improvement of the successive models. 
The structure of LeNet-5 is shown in Table \ref{tab:arch:lenet} in Appendix \ref{app:stru:lenet5}. It contains two convolutional layers and three fully-connected layers. The results of \sys~and baseline methods on LeNet-5 under different privacy budgets are demonstrated in the left part of Figure \ref{fig:acc} (a). It can be seen from the figure that our \sys~realizes consistently higher model accuracy than the existing baselines under various privacy budget settings. For instance, when $\epsilon=1$ and $\delta=10^{-5}$, \sys~achieves 96.5\% model accuracy, while DPSGD and Spectral-DP obtain 92.7\% and 95.3\%, which is 3.8\% and 1.2\% lower than ours, respectively. 
The performance gain is relatively steady with respect to the privacy budget. Note that a lower $\epsilon$ indicates tighter requirement of privacy. Thus, it can be shown from the figure that our approach can protect the model parameters and training dynamics while maintaining certain model utilities. 
To better illustrate the training procedure of \sys, we present the variation of the model accuracy concerning the privacy budget $\epsilon$ and training epochs on LeNet-5 and MNIST when $\epsilon=2$ in the middle and right part of Figure \ref{fig:acc} (a), respectively. It can be observed that the model utility gradually becomes better when $\epsilon$ grows larger for all three approaches. However, \sys~shows a consistently better performance than Spectral-DP and DPSGD on all values of the privacy budget. For instance, $\epsilon$ is calculated around 0.31 at the beginning of the training procedure, at which \sys~achieves 78.37\% accuracy while DPSGD and Spectral-DP obtain 33.38\% and 68.78\% model accuracy, respectively. The performance gain gradually decreases when $\epsilon$ increases from 0 to 2. The relationship between the model accuracy and training epochs is similar to the privacy budget $\epsilon$. The model accuracy with all three approaches increases rapidly from epoch 0 to 10 and becomes slow after epoch 11. During the entire training procedure, \sys~achieves higher model accuracy than Spectral-DP and DPSGD.

\subsubsection{\sys~on ResNet-20 and CIFAR-10}

ResNet was a type of model architecture proposed in 2015 \cite{resnet}. To address the issue of the vanishing/exploding gradient, ResNet architecture introduces the concept called Residual blocks equipped with skip connections. ResNet-20 is one of the typical architectures that utilize the Residual blocks to enhance the model performance, which contains three convolutional layers and one fully-connected layer. The concrete model structure is presented in Table \ref{tab:arch:resnet20} in Appendix \ref{app:stru:resnet20}.

We adopt CIFAR-10 as the dataset to evaluate the performance of \sys~and all the baselines on ResNet-20. The results of the model accuracy under different DP methods are shown in the left part of Figure \ref{fig:acc} (b). \sys~achieves consistently better model utility than DPSGD and Spectral-DP under all settings of privacy budget (i.e., the red solid line in the figure). For instance, the model accuracy using \sys~is 49.06\%, 57.36\%, 59.31\%, and 58.83\% when $\epsilon=0.5, 1, 1.5$ and 2, while Spectral-DP obtains 42.91\%, 51.61\%, 54.9\%, and 55.68\%, respectively. The performance gain is more significant compared to DPSGD, which achieves 29.55\%, 31.35\%, 31.91\%, and 32.1\% model accuracy. With the target $\epsilon$ grows larger, the model utility becomes better, which conforms to the theoretical analysis. Compared to the LeNet-5 model, the performance gain is more significant than the baselines, since ResNet-20 is more complicated than LeNet-5. 
We also present the training dynamics of the ResNet-20 model in Figure \ref{fig:acc} (b) when $\epsilon=2$, in which the middle (\textit{resp.} right) part is the relationship between the model accuracy and the current privacy budget $\epsilon$ (\textit{resp.} training epochs). It can be shown from the figure that with $\epsilon$ increases from zero to 2, the model accuracy gradually grows higher with slight perturbation. When $\epsilon$ is smaller than 1, the model accuracy obtained with \sys~and Spectral-DP is similar, while \sys~outperforms Spectral-DP for up to 5\% higher when $\epsilon > 1$. On the other hand, \sys~is consistently better than DPSGD during the entire training process. 
As for the training epochs, the trend is similar to the relationship between accuracy and the current privacy budget. \sys~demonstrates better model utility than Spectral-DP after the tenth epoch and is significantly better than DPSGD through the entire training. 

\subsubsection{\sys~on Model-3 and CIFAR-10}

Among all the models we selected in this section to evaluate \sys, Model-3 is the only one that is customized. It was designed in Spectral-DP \cite{spectral} as the most complicated model in their experiments. To better illustrate the performance of \sys~compared to the baselines, we adopt Model-3 as one of the testing models. Model-3 consists of three convolutional layers and two fully connected layers, and the concrete structure is shown in Table \ref{tab:arch:model3} in Appendix \ref{app:stru:model3}.

The model accuracy with different settings of the privacy budget is shown in the left part of Figure \ref{fig:acc} (c). Similar to the previous two models, \sys~outperforms all the baseline methods on Model-3 with respect to different privacy budgets. More concretely, \sys~achieves 56.15\%, 63.93\%, 66.3\%, and 67.87\% model accuracy when $\epsilon=0.5, 1, 1.5$ and 2, respectively. These results are 5.62\%, 3.78\%, 4.11\%, and 4.67\% higher than Spectral-DP and 13.22\%, 18.15\%, 20.03\%, and 21.4\% higher than DPSGD. The performance gain proves that our \sys~retains more parameter information than other approaches while keeping the same noise scale, which conforms to the theoretical analysis. We present the training dynamics on Model-3 in the middle and right part of Figure \ref{fig:acc} (c) when the target privacy budget is $(2, 10^{-5})$. Similar to the situations on ResNet-20, the model accuracy on the validation set gradually increases with $\epsilon$ grows larger using all three methods. During the training procedure, \sys~obtains better results than Spectral-DP and DPSGD consistently under the same value of privacy budget. The performance gain over DPSGD is more significant than Spectral-DP. The curve of the former is smooth while the latter exhibits a similar perturbation as ours. Finally, the right figure illustrates the relationship between the model accuracy and the training epochs on Model-3. It can be observed from the figure that to achieve the same model accuracy, \sys~requires fewer training epochs than the baselines. For instance, DPSGD costs 24 epochs to achieve model accuracy higher than 45\%, while \sys~only needs 2. In the meantime, \sys~requires 11 epochs to obtain over 60\% model accuracy while the best baseline Spectral-DP needs 16 epochs. 

\subsubsection{\sys~on AlexNet and CIFAR-10}

Finally, we adopt AlexNet \cite{krizhevsky2012imagenet} and CIFAR-10 to evaluate the performance of our \sys~on larger models. AlexNet was proposed in 2012. As an evolutionary improvement over LeNet-5, it achieved excellent performance in the 2012 ImageNet challenge and created the prosperity of deep learning. AlexNet consists of five convolutional and three fully-connected layers. The concrete model structure is presented in Table \ref{tab:arch:alexnet} in Appendix \ref{app:stru:alex}.

We evaluate our \sys~on AlexNet with different values of $\epsilon$ and compare it with the baseline methods. The results are shown in the left part of Figure \ref{fig:acc} (d). The performance gain of \sys~over DPSGD is the most significant on AlexNet among all the selected models. For instance, our approach achieves 51.93\%, 55.8\%, 56.89\%, and 57.39\% model accuracy, while DPSGD obtains 22.86\%, 21.11\%, 20.78\%, and 20.54\% accuracy when $\epsilon=0.5, 1, 1.5$, and 2, respectively. Our advantage over DPSGD demonstrates that \sys~can adopts a lower noise scale than DPSGD and the performance gap becomes larger on more complicated models. We also acquire consistently better results than Spectral-DP with 2.5\% to 4.47\% higher accuracy.
The relationship between the model accuracy and the current privacy budget $\epsilon$ on AlexNet when target $\epsilon=2$ is shown in the middle part of Figure \ref{fig:acc} (d). Compared to ResNet-20 and Model-3, the accuracy of the validation set is more stable and gradually increases when $\epsilon$ grows larger. 
We also illustrate the relationship between the model accuracy and training epochs in the right part of Figure \ref{fig:acc} (d). It can be demonstrated from the figure that \sys~and Spectral-DP share similar training dynamics, where the model accuracy gradually increases with slight perturbation during the training. On the other hand, the accuracy of DPSGD first increases from epoch 0 to 15, and slightly decreases before epoch 20. The relatively low model accuracy reflects that DPSGD adds a higher noise level to the gradients and harms the model utility, especially when the model is complicated.

\begin{figure*}[!t]
	\centering
	\subfigure[Performance of \sys~on LeNet-5 and MNIST.]{

		\begin{minipage}{0.33\textwidth}
        \centering
    \includegraphics[scale=0.75]{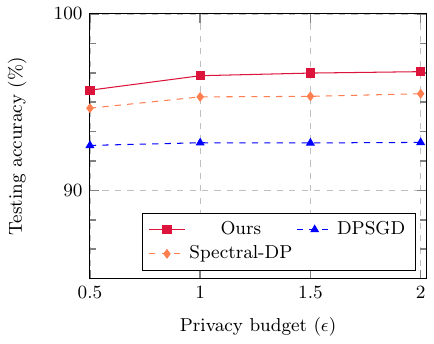}
		\end{minipage}

		\begin{minipage}[c]{0.33\textwidth}
  \includegraphics[scale=0.75]{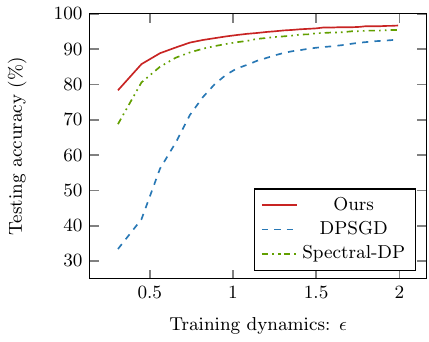}
		\end{minipage}

		\begin{minipage}[c]{0.33\textwidth}
    \includegraphics[scale=0.75]{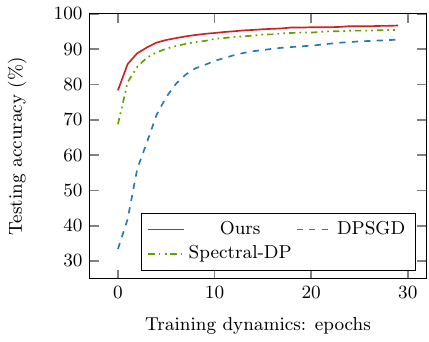}
		\end{minipage}
  }\label{fig:lenet}

 \subfigure[Performance of \sys~on ResNet-20 and CIFAR-10.]{
		\begin{minipage}[c]{0.33\textwidth}
      \includegraphics[scale=0.75]{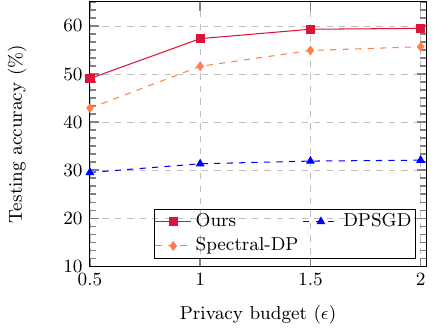}
		\end{minipage}
	
		\begin{minipage}[c]{0.33\textwidth}
        \includegraphics[scale=0.75]{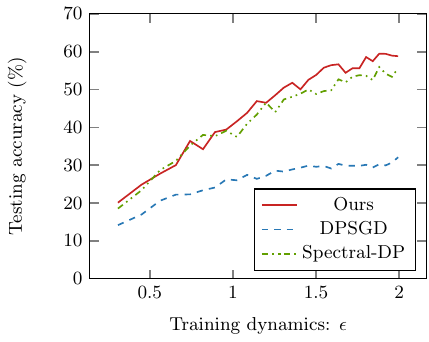}
		\end{minipage}
		
		\begin{minipage}[c]{0.33\textwidth}
          \includegraphics[scale=0.75]{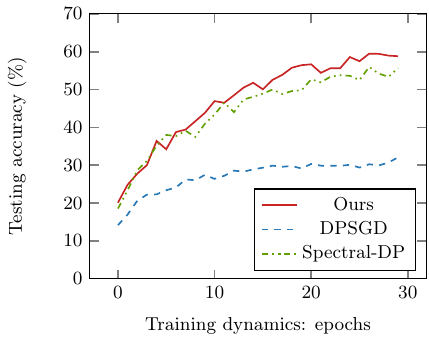}
		\end{minipage}
	}\label{fig:resnet}

 \subfigure[Performance on of \sys~on AlexNet and CIFAR-10.]{
		\begin{minipage}[c]{0.33\textwidth}
            \includegraphics[scale=0.75]{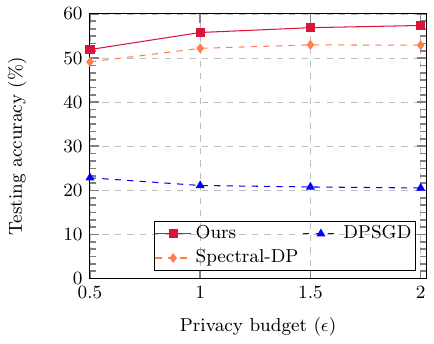}
		\end{minipage}
	
		\begin{minipage}[c]{0.33\textwidth}
              \includegraphics[scale=0.75]{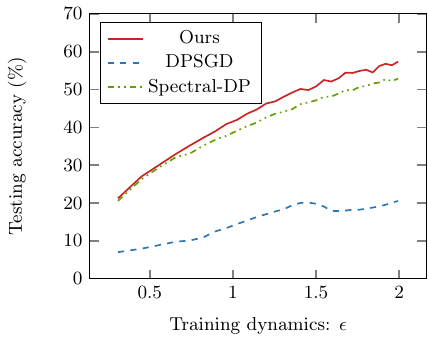}
		\end{minipage}
		
		\begin{minipage}[c]{0.33\textwidth}
                \includegraphics[scale=0.75]{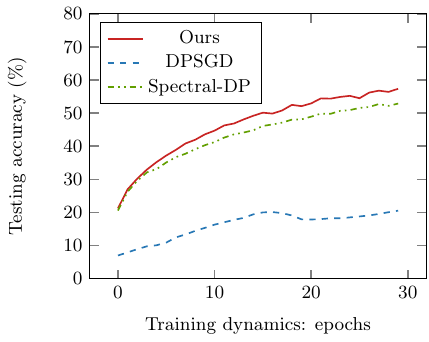}
		\end{minipage}
	}\label{fig:alexnet}

  \subfigure[Performance on of \sys~on Model-3 and CIFAR-10.]{
		\begin{minipage}[c]{0.33\textwidth}
            \includegraphics[scale=0.75]{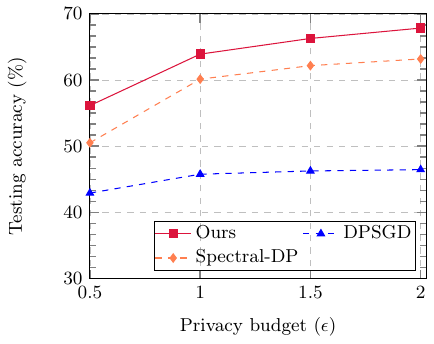}
		\end{minipage}
	
		\begin{minipage}[c]{0.33\textwidth}
              \includegraphics[scale=0.75]{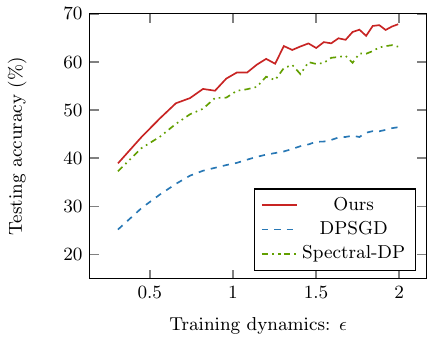}
		\end{minipage}
		
		\begin{minipage}[c]{0.33\textwidth}
                \includegraphics[scale=0.75]{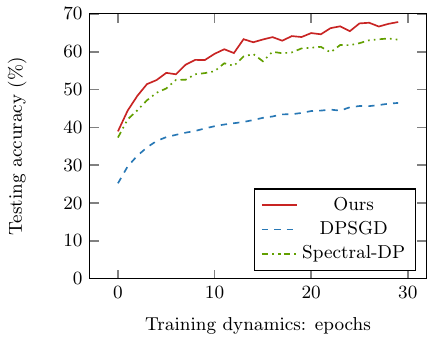}
		\end{minipage}
	}\label{fig:model3}
 \caption{Relationship between the testing accuracy and $\epsilon$ (left), and training dynamics when $\epsilon=2$ (middle and right).}
 \label{fig:acc}
\end{figure*}

In all, the experimental results in this section indicate that our \sys~can maintain the best model utility across various models and datasets. The high model accuracy conforms to our theoretical results, that our approach requires a lower noise scale compared to the baselines without discarding any gradient information. 

\subsection{Transfer Learning}
\label{subsec:exp:transfer}

We present the model accuracy of \sys~and baselines in the transfer learning settings. We adopt the ResNet-18 as the model architecture and evaluate our scheme in two different settings. ResNet-18 is another representative model architecture in the ResNet architectures, whose concrete structure is shown in Table \ref{tab:arch:resnet18} in Appendix \ref{app:stru:res18}. More concretely, we first fix the entire network of ResNet-18 and only re-train the last convolutional layer, which is denoted as ResNet-18-1Conv. Then we retrain the last two convolutional layers and denote it as ResNet-18-Conv2. For transfer learning, we set the learning rate as 0.001.

\begin{table*}[]
  \caption{Comparisons of accuracy (\%) in transfer learning on CIFAR-10 and CIFAR-100. }
    \label{tab:acc_transfer}
  \centering
  \resizebox{\textwidth}{!}{
\begin{tabular}{l|c|c|c|c|c|c|c|c|c|c|c|c}
\toprule
 \textbf{Datasets} & \multicolumn{6}{c|}{CIFAR-10} & \multicolumn{6}{c}{CIFAR-100} \\
 \midrule
  \textbf{Transfer settings} & \multicolumn{3}{c|}{ResNet18-1Conv} & \multicolumn{3}{c|}{ResNet18-2Conv} & \multicolumn{3}{c|}{ResNet18-1Conv} & \multicolumn{3}{c}{ResNet18-2Conv} \\
 \midrule
 \textbf{Privacy budget}& $\epsilon=0.5$ & $\epsilon=0.1$& $\epsilon=2$ & $\epsilon=0.5$ & $\epsilon=0.1$& $\epsilon=2$ & $\epsilon=0.5$ & $\epsilon=0.1$& $\epsilon=2$ & $\epsilon=0.5$ & $\epsilon=0.1$& $\epsilon=2$ \\
 \midrule
Plain & \multicolumn{3}{c|}{80.27} & \multicolumn{3}{c|}{86.59} & \multicolumn{3}{c|}{52.74} & \multicolumn{3}{c}{59.32} \\
\midrule
DPSGD \cite{dpsgd} & 63.87 & 65.98 & 67.05 & 62.51 & 65.7 & 67.64 & 19.41 & 26.89 & 30.44 & 16.11 & 23.25 & 27.57 \\
\midrule
Spectral-DP \cite{spectral} & 70.22 & 74.18 & 76.48 & 68.33 & 75.11 & 79.44 & 26.78 & 37.36 & 43.46 & 22.65 & 34.55 & 44.02 \\
\midrule
\sys~\textbf{(Ours)} & \textbf{70.57} & \textbf{74.76} & \textbf{76.83} & \textbf{69.96} & \textbf{76.77} & \textbf{80.86} & \textbf{27.13} & \textbf{37.7} & \textbf{44.14} & \textbf{23.96} & \textbf{35.96} & \textbf{45.12}\\
\bottomrule
\end{tabular}
}
\end{table*}

The model accuracy of \sys~and baseline methods under the transfer learning settings on CIFAR-10 and CIFAR-100 is shown in Table \ref{tab:acc_transfer}. \textbf{The results illustrate that our \sys~also exhibits better performance than the existing baselines in transfer learning.} 
For instance, \sys~achieves 76.83\% model accuracy when $\epsilon=2$ under ResNet-18-1Conv setting, which is 0.35\% and 9.78\% higher than Spectral-DP and DPSGD, respectively. Similarly, under the ResNet-18-2Conv setting, our method obtains 69.96\%, 76.77\%, and 80.86\% model accuracy when $\epsilon=0.5, 1$, and 2. Compared to the baseline approaches, our scheme is 1.42\% to 1.66\% higher than Spectral-DP and 7.45\% to 13.22\% higher than DPSGD. 
On CIFAR-100, the results exhibit a similar trend. For instance, compared to the model accuracy of the plain models (i.e., without DP perturbation), our approach is 4.56\% lower when $\epsilon=2$ on ResNet18-1Conv and CIFAR-10, which is 0.35\% and 9.78\% higher than Spectral-DP and DPSGD, respectively. Overall, the performance gain under transfer learning is less significant than training from scratch, which conforms to the anticipation that the majority of the model layers are fixed without re-training. 
Nevertheless, the results demonstrate that our \sys~can attain a better model utility than existing methods under the same privacy budget in transfer learning. The consistently better results derive from the fact that our method requires the lowest noise level without deleting any gradient information. 

\subsection{Ablation Study}
\label{subsec:exp:ablation}

\begin{figure}
    \centering
    \includegraphics[width=.9\columnwidth]{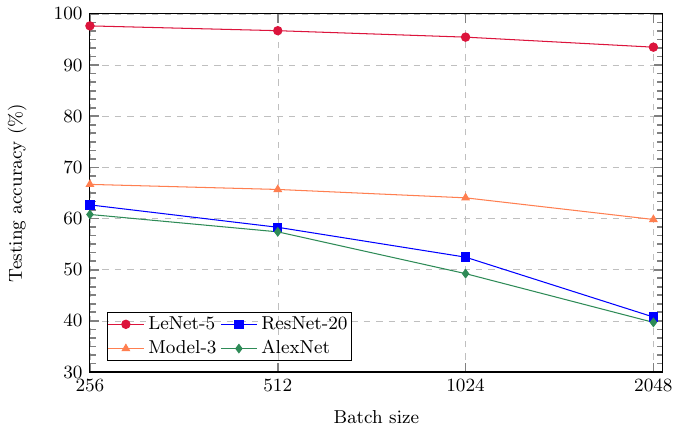}
    \caption{Relationship between the accuracy and batch size.}
    \label{fig:batchsize}
\end{figure}

\noindent\textbf{Batch Size.}
The results of batch size v.s. model accuracy when $\epsilon=2$ is shown in Figure \ref{fig:batchsize}. It can be seen from the figure that the model utility gradually decreases when the batch size grows larger. For instance, on LeNet-5, \sys~obtains 97.67\%, 96.73\%, 95.47\%, and 93.51\% model accuracy when batch size equals 256, 512, 1024, and 2048, respectively. The accuracy decreases by 4.16\% with batch size increases. The accuracy of Model-3 shows a similar trend, which decreases by 6.86\% when the batch size is 2048. The variation of accuracy is more significant on ResNet-20 and AlexNet, which exhibit 21.92\% and 21.1\% change, respectively. Note that a smaller batch size would lead to slower training procedures. In our paper, we set the batch size as 500 to better compare with the baselines, which is also close to the optimal setting. 

\noindent\textbf{Learning Rate.}
The relationship between the learning rate and the model accuracy is shown in Figure \ref{fig:abl_lr}. Overall, the model accuracy decreases as the learning rate grows larger. Among all the models, LeNet-5 achieves the highest accuracy and is relatively stable. When the learning rate increases to 0.01, the model attains 96.66\% utility, which is only 0.07\% lower than the learning rate equals 0.001. With the learning rate getting larger to 0.01, the model accuracy decreases to 80.18\% on LeNet-5. 
The other three models share a similar trend, that the accuracy drops to around 40\% when the learning rate is 0.025, increases slightly back, and then gradually decreases as the learning rate grows. 
In our experiments, we set the learning rate as 0.01 for the optimal results, which is also selected as the preferable setting for Spectral-DP.

\begin{figure}
    \centering
    \includegraphics[width=.9\columnwidth]{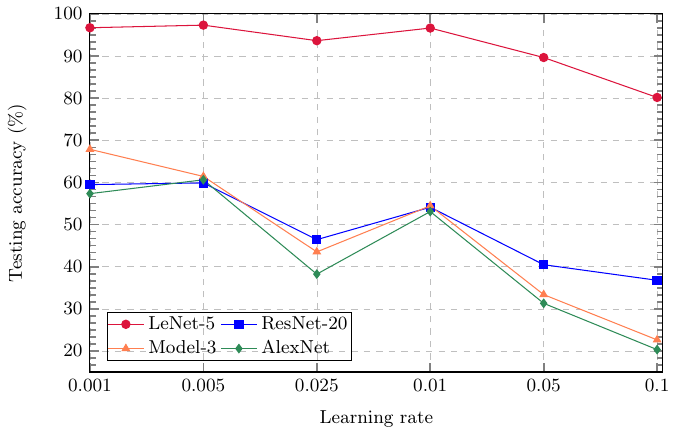}
    \caption{Relationship between the accuracy and learning rate.}
    \label{fig:abl_lr}
\end{figure}

\begin{table*}[htbp]
  \caption{Model accuracy (\%) with different values of clipping norm and privacy budgets.}
		\label{tab:abl_clip}
\centering
	\begin{tabular}{l|c|c|c|c|c|c|c|c}
		\toprule
           Dataset & \multicolumn{2}{|c}{MNIST} & \multicolumn{2}{|c}{CIFAR-10} & \multicolumn{2}{|c}{CIFAR-10} & \multicolumn{2}{|c}{CIFAR-10}\\
           \midrule
           Models & \multicolumn{2}{|c}{LeNet-5} & \multicolumn{2}{|c}{ResNet-20} & \multicolumn{2}{|c}{Model-3} & \multicolumn{2}{|c}{AlexNet} \\
            \midrule
            Clipping Norm & 0.1  & 1 & 0.1 & 1 & 0.1  & 1 & 0.1 & 1\\
            \midrule
            $(\epsilon, \delta)=(0.5, 10^{-5})$ & 95.47 & \textbf{95.68} & 43.85 & \textbf{49.06} & \textbf{60.2} & 56.15 & \textbf{52.25} & 51.93\\
            $(\epsilon, \delta)=(1, 10^{-5})$ & 96.43 & \textbf{96.5} & 44.37 & \textbf{57.36} & 60.56 & \textbf{63.93} & 52.49 & \textbf{55.8}\\
            $(\epsilon, \delta)=(1.5, 10^{-5})$ & 96.42 & \textbf{96.65} & 44.41 & \textbf{59.31} & 60.31 & \textbf{66.3} & 52.44 & \textbf{56.89}\\
            $(\epsilon, \delta)=(2, 10^{-5})$ & 96.4 & \textbf{96.73} & 45.38 & \textbf{59.49} & 60.2 & \textbf{67.87} & 52.44 & \textbf{57.39}\\
			\bottomrule
		\end{tabular}
\end{table*}

\noindent\textbf{Clipping Norm.}
We present the accuracy with different values of clipping norm and privacy budgets in Table \ref{tab:abl_clip}. For each model, we evaluate \sys~with clipping norm as 1 and 0.1. It can be seen from the table that for the majority of the experiments, clipping norm equals one is a better choice. For example, \sys~achieves 96.73\% model accuracy on LeNet-5 when clipping norm as 1 and privacy budget as $(2, 10^{-5})$, while it obtains 96.4\% when clipping norm is 0.1. The gap becomes larger on ResNet-20 and Model-3, e.g., $\mathsf{acc}=59.45$ when $\mathsf{cl}=1$ and $\epsilon=1.5$ while $\mathsf{acc}=44.41$ when $\mathsf{cl}=0.1$, causing a accuracy decrease of 15.04\%. Although the accuracy on Model-3 when $\mathsf{cl=0.1}$ and $ \epsilon=0.5$ is 4.05\% higher than $\mathsf{cl=1}$, \sys~performs better when the clipping norm is set to 1 in general. 

\section{Related Work}
\label{sec:relatedwork}




\subsection{Differential Privacy in Deep Learning}
\label{subsec:rl:dpfordl}

Regarding different privacy goals, DP can be adapted to various stages of the deep learning models, including the input data, the object functions, data labels, the well-trained model parameters (i.e., outputs), and the training procedures. 

\noindent\textbf{Input Perturbations.} DP that is applied to the input data of the deep learning model is equivalent to the training with sanitized data. More concretely, the noise is injected into the input data so that the global distribution is retained while information on each piece of data can be concealed \cite{gopi2020differentially, mcmahan2017learning, kifer2014pufferfish, andres2013geo}. This strategy is usually adopted when there is sensitive information contained in the training data in order to defend against the model inference attack. For instance, Google proposed RAPPOR \cite{erlingsson2014rappor}, which permits the developer to privately collect the user's data. Triastcyn \emph{et al.} brought forward a method that can generate a synthesis dataset with the same distribution as the original data yet preserve the data privacy \cite{triastcyn2018generating}. There are also quantities of work \cite{298240} that combines the input perturbation with other techniques to achieve more complicated goals. 

\noindent\textbf{Object Function Perturbations.} Methods in this category involve the noise addition to the object function of the deep learning models. Phan \emph{et al.} propose the deep private auto-encoder (dPA) \cite{phan2016differential}, which realizes differentially private characteristics based on the functional mechanism. In \cite{phan2017preserving}, the authors design an Adaptive Laplace Mechanism (AdLM), which provides a novel scheme to perturb the object function. More recently, Iyengar \emph{et al.} establish a benchmark using high-throughout data based on a more practical objective perturbation function \cite{iyengar2019towards}. To better calculate the function sensitivity for the noise intensity, some works propose to use the objective function perturbation upon an approximate convex polynomial function \cite{phan2016differential, 287254}. The above-mentioned work can be usually adopted before the network training procedure. 

\noindent\textbf{Label Perturbations}. The DP mechanism can also be exerted on the label level of the dataset. One of the typical applications of label perturbation is to inject noise into the transfer phase in the teacher-student framework \cite{zhao2018distributed}. Papernot \emph{et al.} propose PATE \cite{papernot2017semisupervised}, which is a teach-student model that permits transfer learning with differentially private characteristics. This method is further extended to large-scale applications with a novel noise aggregation scheme \cite{papernot2018scalable}. 
Based on this, some works adopt PATE as the discriminator of the Generative Adversarial Networks (GANs) to achieve DP guarantees.

\noindent\textbf{Output Perturbations.} Similar to the objective function perturbation, the output perturbation is also designed for deep learning tasks with convex objective functions \cite{mireshghallah2020privacy, 298182, 10.1145/3576915.3616592}. Rastogi \emph{et al.} propose to inject noise to the output to accurately answer high sensitivity queries \cite{10.1145/1559795.1559812}. There are also a bunch of works that utilize DP-based output to preserve the model privacy of clients in the federated learning setting \cite{liu2024cross, 10.1145/3576915.3623193, 287324}. 

\noindent\textbf{Gradient Perturbation.} Despite the effectiveness of the above methods, they cannot defend against adversaries that have access to the training dynamics. Moreover, directly adding noise to the well-trained parameters would devastate model utility. Thus to address those capability limits, in 2016, Abadi \emph{et al.} proposed to inject the noise into the entire training procedure \cite{dpsgd}. The designed scheme, called DPSGD, realizes efficient and DP-guaranteed SGD algorithms for general deep learning models. Since then, there have been quantities of work brought forward in this region \cite{294621}. Yu \emph{et al.} propose the gradient embedding perturbation (GEP) which projects the weight matrix to a low-dimensional domain \cite{yu2021do}. Several works utilize mapping functions to project or decompose the gradient weights to a different space for better performance \cite{yu2021large, nasr2020improving}. There are also approaches that focus on optimizing the utility of DPSGD from various aspects, such as controlling the clipping norm \cite{papernot2021tempered}, minimizing the trainable parameters \cite{luo2021scalable}, designing DP-friendly model architectures \cite{cheng2022dpnas}, privacy budget allocation \cite{asi2021private, yu2019differentially}, and selecting suitable data features \cite{tramer2021differentially}. These works follow the fundamental design of DPSGD and present enhancing methods as an auxiliary improvement. 
In 2023, Feng \emph{et al.} propose Spectral-DP \cite{spectral}, which modifies the conceptual foundations of DPSGD by injecting noise in the Fourier domain. As an alternative to DPSGD, Spectral-DP permits the deep learning models to retain higher accuracy. However, Spectral-DP relies highly on the filtering ratio and inevitably harms the original gradient information. 

\subsection{Domain Transformations in DP}

In the more general context of DP, some previous works proposed that utilizing the sparse distribution of the vectors in the transferred domain can improve the performance of DP. A plethora of work adopts the Discrete Wavelet Transformation to approximate the query processing or facilitate the query optimization \cite{chakrabarti2001approximate, vitter1999approximate, garofalakis2005wavelet}. Similarly, Xiao \emph{et al.} propose Privelet, which utilizes DWT in the database to provide accurate results for range-count queries \cite{xiao2010differential}. Another line of work makes use of the FFT to the database and achieves promising results \cite{cormode2018marginal, yamamoto2022efficient}. For instance, Rastogi \emph{et al.} design a Fourier perturbation algorithm to ensure DP for time-series data despite the presence of temporal correlation \cite{FFT1}.
The success of these methods suffices to illustrate the effectiveness of the domain transformation, which has not been extensively adopted in deep learning training algorithms. Enlightened by the observations, in this work, we also compute the gradients in each iteration in the Fourier domain, which further enables our noise reduction approach.

\section{Limitations}
\label{app:limitation}

Currently, the implementation of \sys~is straightforward without any thread-level parallelization. Consequently, the training procedure is relatively slow. This issue can be solved by integrating the implementation to the $\mathsf{octpus}$ library. 
Furthermore, \sys~adopts the Gaussian mechanism to apply the noise. There might be a better way to exert the noise to the gradients. Finally, this work mainly focuses on the traditional deep learning models. The privacy issues of training or fine-tuning other architectures (e.g., large language models) will be researched in the future.

\section{Conclusion}

In this paper, we propose a robust and provably secure approach for differentially private training called \sys. Our method computes the model gradients in the frequency domain via FFT and reduces the noise scale using real-part selection. At the core of our method lies a new gradient-preserving noise reduction strategy. Compared to the previous methods, our \sys~only requires half of the noise scale than DPSGD while preserving all the gradient information. We analyze the performance of \sys~theoretically which demonstrates its effectiveness. 
The experimental results show that our method works consistently better than the baselines under both direct training and transfer learning settings. 


\section*{Ethics Considerations}
This work addresses the development and improvement of differentially private training methods for deep learning models. 
The methodology and experiments presented in this paper rely on publicly available datasets, ensuring that no private or sensitive information was used or exposed. Furthermore, the findings and techniques are general in nature and are intended to advance the state of the art in privacy-preserving machine learning. They do not pose direct or immediate ethical risks, as the focus is on the framework and principles rather than specific implementations.

\section*{Open Science}

Based on the experimental settings, we fully implement our proposed method. 
Our code is open-sourced and available at
\begin{center}
    \href{https://anonymous.4open.science/r/FReDP-4CB2}{https://anonymous.4open.science/r/FReDP-4CB2}
\end{center}


\bibliographystyle{plain}
\bibliography{ref}

\begin{thebibliography}{10}

\bibitem{dpsgd}
Martin Abadi, Andy Chu, Ian Goodfellow, H~Brendan McMahan, Ilya Mironov, Kunal
  Talwar, and Li~Zhang.
\newblock Deep learning with differential privacy.
\newblock In {\em Proceedings of the 2016 ACM SIGSAC conference on computer and
  communications security}, pages 308--318, 2016.

\bibitem{andres2013geo}
Miguel~E Andr{\'e}s, Nicol{\'a}s~E Bordenabe, Konstantinos Chatzikokolakis, and
  Catuscia Palamidessi.
\newblock Geo-indistinguishability: Differential privacy for location-based
  systems.
\newblock In {\em Proceedings of the 2013 ACM SIGSAC conference on Computer \&
  communications security}, pages 901--914, 2013.

\bibitem{asi2021private}
Hilal Asi, John Duchi, Alireza Fallah, Omid Javidbakht, and Kunal Talwar.
\newblock Private adaptive gradient methods for convex optimization.
\newblock In {\em International Conference on Machine Learning}, pages
  383--392. PMLR, 2021.

\bibitem{bini2012polynomial}
Dario Bini and Victor~Y Pan.
\newblock {\em Polynomial and matrix computations: fundamental algorithms}.
\newblock Springer Science \& Business Media, 2012.

\bibitem{chakrabarti2001approximate}
Kaushik Chakrabarti, Minos Garofalakis, Rajeev Rastogi, and Kyuseok Shim.
\newblock Approximate query processing using wavelets.
\newblock {\em The VLDB Journal}, 10:199--223, 2001.

\bibitem{cheng2022dpnas}
Anda Cheng, Jiaxing Wang, Xi~Sheryl Zhang, Qiang Chen, Peisong Wang, and Jian
  Cheng.
\newblock Dpnas: Neural architecture search for deep learning with differential
  privacy.
\newblock In {\em Proceedings of the AAAI Conference on Artificial
  Intelligence}, volume~36, pages 6358--6366, 2022.

\bibitem{294621}
Giovanni Cherubin, Boris Kopf, Andrew Paverd, Shruti Tople, Lukas Wutschitz,
  and Santiago Zanella-B{\'e}guelin.
\newblock {Closed-Form} bounds for {DP-SGD} against record-level inference
  attacks.
\newblock In {\em 33rd USENIX Security Symposium (USENIX Security 24)}, pages
  4819--4836, Philadelphia, PA, August 2024. USENIX Association.

\bibitem{cormode2018marginal}
Graham Cormode, Tejas Kulkarni, and Divesh Srivastava.
\newblock Marginal release under local differential privacy.
\newblock In {\em Proceedings of the 2018 International Conference on
  Management of Data}, pages 131--146, 2018.

\bibitem{ding2017circnn}
Caiwen Ding, Siyu Liao, Yanzhi Wang, Zhe Li, Ning Liu, Youwei Zhuo, Chao Wang,
  Xuehai Qian, Yu~Bai, Geng Yuan, et~al.
\newblock Circnn: accelerating and compressing deep neural networks using
  block-circulant weight matrices.
\newblock In {\em Proceedings of the 50th Annual IEEE/ACM International
  Symposium on Microarchitecture}, pages 395--408, 2017.

\bibitem{10.1145/3576915.3616592}
Minxin Du, Xiang Yue, Sherman S.~M. Chow, Tianhao Wang, Chenyu Huang, and Huan
  Sun.
\newblock Dp-forward: Fine-tuning and inference on language models with
  differential privacy in forward pass.
\newblock In {\em Proceedings of the 2023 ACM SIGSAC Conference on Computer and
  Communications Security}, CCS '23, page 2665–2679, New York, NY, USA, 2023.
  Association for Computing Machinery.

\bibitem{dwork2014algorithmic}
Cynthia Dwork and Aaron Roth.
\newblock The algorithmic foundations of differential privacy.
\newblock {\em Foundations and Trends{\textregistered} in Theoretical Computer
  Science}, 9(3--4):211--407, 2014.

\bibitem{erlingsson2014rappor}
{\'U}lfar Erlingsson, Vasyl Pihur, and Aleksandra Korolova.
\newblock Rappor: Randomized aggregatable privacy-preserving ordinal response.
\newblock In {\em Proceedings of the 2014 ACM SIGSAC conference on computer and
  communications security}, pages 1054--1067, 2014.

\bibitem{spectral}
Ce~Feng, Nuo Xu, Wujie Wen, Parv Venkitasubramaniam, and Caiwen Ding.
\newblock Spectral-dp: Differentially private deep learning through spectral
  perturbation and filtering.
\newblock In {\em 2023 IEEE Symposium on Security and Privacy (SP)}, pages
  1944--1960. IEEE Computer Society, 2023.

\bibitem{garofalakis2005wavelet}
Minos Garofalakis and Amit Kumar.
\newblock Wavelet synopses for general error metrics.
\newblock {\em ACM Transactions on Database Systems (TODS)}, 30(4):888--928,
  2005.

\bibitem{qian2}
Xueluan Gong, Ziyao Wang, Shuaike Li, Yanjiao Chen, and Qian Wang.
\newblock A gan-based defense framework against model inversion attacks.
\newblock {\em IEEE Transactions on Information Forensics and Security},
  18:4475--4487, 2023.

\bibitem{gopi2020differentially}
Sivakanth Gopi, Pankaj Gulhane, Janardhan Kulkarni, Judy~Hanwen Shen, Milad
  Shokouhi, and Sergey Yekhanin.
\newblock Differentially private set union.
\newblock In {\em International Conference on Machine Learning}, pages
  3627--3636. PMLR, 2020.

\bibitem{resnet}
Kaiming He, Xiangyu Zhang, Shaoqing Ren, and Jian Sun.
\newblock Deep residual learning for image recognition.
\newblock In {\em Proceedings of the IEEE conference on computer vision and
  pattern recognition}, pages 770--778, 2016.

\bibitem{iyengar2019towards}
Roger Iyengar, Joseph~P Near, Dawn Song, Om~Thakkar, Abhradeep Thakurta, and
  Lun Wang.
\newblock Towards practical differentially private convex optimization.
\newblock In {\em 2019 IEEE Symposium on Security and Privacy (SP)}, pages
  299--316. IEEE, 2019.

\bibitem{kifer2014pufferfish}
Daniel Kifer and Ashwin Machanavajjhala.
\newblock Pufferfish: A framework for mathematical privacy definitions.
\newblock {\em ACM Transactions on Database Systems (TODS)}, 39(1):1--36, 2014.

\bibitem{krizhevsky2012imagenet}
Alex Krizhevsky.
\newblock Imagenet classification with deep convolutional neural networks.
\newblock {\em Advances in neural information processing systems}, 2012.

\bibitem{cifar}
Alex Krizhevsky and Geoffrey Hinton.
\newblock Learning multiple layers of features from tiny images.
\newblock 2009.

\bibitem{lenet}
Y.~Lecun, L.~Bottou, Y.~Bengio, and P.~Haffner.
\newblock Gradient-based learning applied to document recognition.
\newblock {\em Proceedings of the IEEE}, 86(11):2278--2324, 1998.

\bibitem{mnist}
Yann LeCun and Corinna Cortes.
\newblock {MNIST} handwritten digit database.
\newblock 2010.

\bibitem{ren2}
Yaliang Li, Houping Xiao, Zhan Qin, Chenglin Miao, Lu~Su, Jing Gao, Kui Ren,
  and Bolin Ding.
\newblock Towards differentially private truth discovery for crowd sensing
  systems.
\newblock In {\em 2020 IEEE 40th International Conference on Distributed
  Computing Systems (ICDCS)}, pages 1156--1166, 2020.

\bibitem{lin2018fft}
Sheng Lin, Ning Liu, Mahdi Nazemi, Hongjia Li, Caiwen Ding, Yanzhi Wang, and
  Massoud Pedram.
\newblock Fft-based deep learning deployment in embedded systems.
\newblock In {\em 2018 Design, Automation \& Test in Europe Conference \&
  Exhibition (DATE)}, pages 1045--1050. IEEE, 2018.

\bibitem{liu2024cross}
Junxu Liu, Jian Lou, Li~Xiong, Jinfei Liu, and Xiaofeng Meng.
\newblock Cross-silo federated learning with record-level personalized
  differential privacy.
\newblock In {\em Proceedings of the 2024 on ACM SIGSAC Conference on Computer
  and Communications Security}, pages 303--317, 2024.

\bibitem{luo2021scalable}
Zelun Luo, Daniel~J Wu, Ehsan Adeli, and Li~Fei-Fei.
\newblock Scalable differential privacy with sparse network finetuning.
\newblock In {\em Proceedings of the IEEE/CVF Conference on Computer Vision and
  Pattern Recognition}, pages 5059--5068, 2021.

\bibitem{mcmahan2017learning}
H.~Brendan McMahan, Daniel Ramage, Kunal Talwar, and Li~Zhang.
\newblock Learning differentially private recurrent language models.
\newblock In {\em International Conference on Learning Representations}, 2018.

\bibitem{mireshghallah2020privacy}
Fatemehsadat Mireshghallah, Mohammadkazem Taram, Praneeth Vepakomma, Abhishek
  Singh, Ramesh Raskar, and Hadi Esmaeilzadeh.
\newblock Privacy in deep learning: A survey.
\newblock {\em arXiv preprint arXiv:2004.12254}, 2020.

\bibitem{renyi}
Ilya Mironov.
\newblock R{\'e}nyi differential privacy.
\newblock In {\em 2017 IEEE 30th computer security foundations symposium
  (CSF)}, pages 263--275. IEEE, 2017.

\bibitem{nasr2020improving}
Milad Nasr and Reza Shokri.
\newblock Improving deep learning with differential privacy using gradient
  encoding and denoising.
\newblock {\em arXiv preprint arXiv:2007.11524}, 2020.

\bibitem{pan2012structured}
Victor~Y Pan.
\newblock {\em Structured matrices and polynomials: unified superfast
  algorithms}.
\newblock Springer Science \& Business Media, 2012.

\bibitem{papernot2017semisupervised}
Nicolas Papernot, Mart{\'\i}n Abadi, {\'U}lfar Erlingsson, Ian Goodfellow, and
  Kunal Talwar.
\newblock Semi-supervised knowledge transfer for deep learning from private
  training data.
\newblock In {\em International Conference on Learning Representations}, 2017.

\bibitem{papernot2018scalable}
Nicolas Papernot, Shuang Song, Ilya Mironov, Ananth Raghunathan, Kunal Talwar,
  and Ulfar Erlingsson.
\newblock Scalable private learning with {PATE}.
\newblock In {\em International Conference on Learning Representations}, 2018.

\bibitem{papernot2021tempered}
Nicolas Papernot, Abhradeep Thakurta, Shuang Song, Steve Chien, and {\'U}lfar
  Erlingsson.
\newblock Tempered sigmoid activations for deep learning with differential
  privacy.
\newblock In {\em Proceedings of the AAAI Conference on Artificial
  Intelligence}, volume~35, pages 9312--9321, 2021.

\bibitem{phan2016differential}
NhatHai Phan, Yue Wang, Xintao Wu, and Dejing Dou.
\newblock Differential privacy preservation for deep auto-encoders: an
  application of human behavior prediction.
\newblock In {\em Proceedings of the AAAI Conference on Artificial
  Intelligence}, volume~30, 2016.

\bibitem{phan2017preserving}
NhatHai Phan, Xintao Wu, and Dejing Dou.
\newblock Preserving differential privacy in convolutional deep belief
  networks.
\newblock {\em Machine learning}, 106(9-10):1681--1704, 2017.

\bibitem{10.1145/1559795.1559812}
Vibhor Rastogi, Michael Hay, Gerome Miklau, and Dan Suciu.
\newblock Relationship privacy: output perturbation for queries with joins.
\newblock In {\em Proceedings of the Twenty-Eighth ACM SIGMOD-SIGACT-SIGART
  Symposium on Principles of Database Systems}, PODS '09, page 107–116, New
  York, NY, USA, 2009. Association for Computing Machinery.

\bibitem{FFT1}
Vibhor Rastogi and Suman Nath.
\newblock Differentially private aggregation of distributed time-series with
  transformation and encryption.
\newblock In {\em Proceedings of the 2010 ACM SIGMOD International Conference
  on Management of Data}, SIGMOD '10, page 735–746, New York, NY, USA, 2010.
  Association for Computing Machinery.

\bibitem{287254}
Sina Sajadmanesh, Ali~Shahin Shamsabadi, Aur{\'e}lien Bellet, and Daniel
  Gatica-Perez.
\newblock {GAP}: Differentially private graph neural networks with aggregation
  perturbation.
\newblock In {\em 32nd USENIX Security Symposium (USENIX Security 23)}, pages
  3223--3240, Anaheim, CA, August 2023. USENIX Association.

\bibitem{10061732}
Ankush Singla and Elisa Bertino.
\newblock Dp-ada: Differentially private adversarial domain adaptation for
  training deep learning based network intrusion detection systems.
\newblock In {\em 2022 IEEE 8th International Conference on Collaboration and
  Internet Computing (CIC)}, pages 89--98, 2022.

\bibitem{sun2021exploring}
Xu~Sun, Zhiyuan Zhang, Xuancheng Ren, Ruixuan Luo, and Liangyou Li.
\newblock Exploring the vulnerability of deep neural networks: A study of
  parameter corruption.
\newblock In {\em Proceedings of the AAAI Conference on Artificial
  Intelligence}, volume~35, pages 11648--11656, 2021.

\bibitem{tramer2021differentially}
Florian Tram$\grave{e}$r and Dan Boneh.
\newblock Differentially private learning needs better features (or much more
  data).
\newblock In {\em International Conference on Learning Representations}, 2021.

\bibitem{triastcyn2018generating}
Aleksei Triastcyn and Boi Faltings.
\newblock Generating differentially private datasets using gans.
\newblock 2018.

\bibitem{vitter1999approximate}
Jeffrey~Scott Vitter and Min Wang.
\newblock Approximate computation of multidimensional aggregates of sparse data
  using wavelets.
\newblock {\em Acm Sigmod Record}, 28(2):193--204, 1999.

\bibitem{298240}
Haichen Wang, Shuchao Pang, Zhigang Lu, Yihang Rao, Yongbin Zhou, and Minhui
  Xue.
\newblock dp-promise: Differentially private diffusion probabilistic models for
  image synthesis.
\newblock In {\em 33rd USENIX Security Symposium (USENIX Security 24)}, pages
  1063--1080, Philadelphia, PA, August 2024. USENIX Association.

\bibitem{WANG2023408}
Yanling Wang, Qian Wang, Lingchen Zhao, and Cong Wang.
\newblock Differential privacy in deep learning: Privacy and beyond.
\newblock {\em Future Generation Computer Systems}, 148:408--424, 2023.

\bibitem{10.1145/3538707}
Zhibo Wang, Jingjing Ma, Xue Wang, Jiahui Hu, Zhan Qin, and Kui Ren.
\newblock Threats to training: A survey of poisoning attacks and defenses on
  machine learning systems.
\newblock {\em ACM Comput. Surv.}, 55(7), dec 2022.

\bibitem{Wang_2023_CVPR}
Zhibo Wang, He~Wang, Shuaifan Jin, Wenwen Zhang, Jiahui Hu, Yan Wang, Peng Sun,
  Wei Yuan, Kaixin Liu, and Kui Ren.
\newblock Privacy-preserving adversarial facial features.
\newblock In {\em Proceedings of the IEEE/CVF Conference on Computer Vision and
  Pattern Recognition (CVPR)}, pages 8212--8221, June 2023.

\bibitem{298182}
Zihao Wang, Rui Zhu, Dongruo Zhou, Zhikun Zhang, John Mitchell, Haixu Tang, and
  XiaoFeng Wang.
\newblock {DPAdapter}: Improving differentially private deep learning through
  noise tolerance pre-training.
\newblock In {\em 33rd USENIX Security Symposium (USENIX Security 24)}, pages
  991--1008, Philadelphia, PA, August 2024. USENIX Association.

\bibitem{xiao2010differential}
Xiaokui Xiao, Guozhang Wang, and Johannes Gehrke.
\newblock Differential privacy via wavelet transforms.
\newblock {\em IEEE Transactions on knowledge and data engineering},
  23(8):1200--1214, 2010.

\bibitem{10.1145/3576915.3623193}
Chulin Xie, Yunhui Long, Pin-Yu Chen, Qinbin Li, Sanmi Koyejo, and Bo~Li.
\newblock Unraveling the connections between privacy and certified robustness
  in federated learning against poisoning attacks.
\newblock In {\em Proceedings of the 2023 ACM SIGSAC Conference on Computer and
  Communications Security}, CCS '23, page 1511–1525, New York, NY, USA, 2023.
  Association for Computing Machinery.

\bibitem{yamamoto2022efficient}
Akito Yamamoto and Tetsuo Shibuya.
\newblock Efficient and highly accurate differentially private statistical
  genomic analysis using discrete fourier transform.
\newblock In {\em 2022 IEEE International Conference on Trust, Security and
  Privacy in Computing and Communications (TrustCom)}, pages 525--532. IEEE,
  2022.

\bibitem{287324}
Yuchen Yang, Bo~Hui, Haolin Yuan, Neil Gong, and Yinzhi Cao.
\newblock {PrivateFL}: Accurate, differentially private federated learning via
  personalized data transformation.
\newblock In {\em 32nd USENIX Security Symposium (USENIX Security 23)}, pages
  1595--1612, Anaheim, CA, August 2023. USENIX Association.

\bibitem{opacus}
Ashkan Yousefpour, Igor Shilov, Alexandre Sablayrolles, Davide Testuggine,
  Karthik Prasad, Mani Malek, John Nguyen, Sayan Ghosh, Akash Bharadwaj,
  Jessica Zhao, Graham Cormode, and Ilya Mironov.
\newblock Opacus: {U}ser-friendly differential privacy library in {PyTorch}.
\newblock {\em arXiv preprint arXiv:2109.12298}, 2021.

\bibitem{yu2021do}
Da~Yu, Huishuai Zhang, Wei Chen, and Tie-Yan Liu.
\newblock Do not let privacy overbill utility: Gradient embedding perturbation
  for private learning.
\newblock In {\em International Conference on Learning Representations}, 2021.

\bibitem{yu2021large}
Da~Yu, Huishuai Zhang, Wei Chen, Jian Yin, and Tie-Yan Liu.
\newblock Large scale private learning via low-rank reparametrization.
\newblock In {\em International Conference on Machine Learning}, pages
  12208--12218. PMLR, 2021.

\bibitem{yu2019differentially}
Lei Yu, Ling Liu, Calton Pu, Mehmet~Emre Gursoy, and Stacey Truex.
\newblock Differentially private model publishing for deep learning.
\newblock In {\em 2019 IEEE symposium on security and privacy (SP)}, pages
  332--349. IEEE, 2019.

\bibitem{zhao2018distributed}
Jun Zhao.
\newblock Distributed deep learning under differential privacy with the
  teacher-student paradigm.
\newblock In {\em Workshops at the Thirty-Second AAAI Conference on Artificial
  Intelligence}, 2018.

\end{thebibliography}

\appendix


\section{Detailed Proofs}

\subsection{Proof of Theorem \ref{theo:extendmain}}
\label{app:proof_theo_2}

\noindent\textbf{Theorem 2.}
    Suppose the noise is $\tau = \tau_x + i \cdot \tau_y$. The real and imaginary part of $\tau$ satisfy: $\tau_x \sim \mathcal{N}(0, a^2 S^2 \sigma^2)$ and $\tau_y \sim \mathcal{N}(0, b^2 S^2 \sigma^2)$, where $\sqrt{a^2+b^2}=1$. Then the results in Theorem \ref{theo:main} still hold.

\begin{proof}
    Let the vector as $\mathbf{V}=(V_1, \cdots, V_N)$, and $\mathbf{T}= \mathcal{F}(\mathbf{V})$, such that $T_j=\frac{1}{\sqrt{N}}\sum_{k=1}^N V_k\exp(-jki2\pi/N)$. Then the noise addition as described in Section \ref{subsec:method:approach} is $\hat{\mathbf{T}}=\mathbf{T}+\tau$. Thus, performing the IFFT on the noised vector in the frequency domain can be represented as 
    \begin{equation}
    \begin{split}
    \mathcal{F}^{-1}(\hat{\mathbf{T}})
			&=\mathcal{F}^{-1}(\mathbf{T})+\mathcal{F}^{-1}(\tau)=\mathbf{V}+\mathcal{F}^{-1}(\tau)
		\end{split}
    \end{equation}
    where 
    \begin{equation}
        \begin{split}
            [\mathcal{F}^{-1}(\tau)]_j&=\frac{1}{\sqrt{N}}\sum_{k=1}^N\tau_k\exp(jki2\pi/N)\\
            &=\frac{1}{\sqrt{N}}\sum_{k=1}^N\tau_k(\cos(\frac{2\pi}{N}kj)+i\sin(\frac{2\pi}{N}kj))
        \end{split}
    \end{equation} 
    Then we have
    \begin{equation}
        \begin{split}
            &\mathcal{R}([\mathcal{F}^{-1}(\tau)])_j\\
            =&\frac{1}{\sqrt{N}}\sum_{k=1}^{N}\tau_x\cos(\frac{2\pi}{N}kj)+\tau_y\sin(\frac{2\pi}{N}kj)\\
            \sim&\mathcal{N}(0,\frac{1}{N}S^2\sigma^2 (a^2 \sum_{k=1}^N\cos^2(\frac{k}{N}2\pi j)+b^2 \sum_{k=1}^N\sin^2(\frac{k}{N}2\pi j)))\\
            =&\mathcal{N}(0,\frac{1}{N}S^2\sigma^2(a^2\frac{N}{2}+b^2\frac{N}{2}))\\
            =&\mathcal{N}(0,\frac{1}{2}S^2\sigma^2)
        \end{split}
    \end{equation}
Thus,
\begin{equation}
    \mathsf{GRe}(\mathcal{F}(\mathbf{V})) = \mathbf{V}+\mathcal{R}(\mathcal{F}^{-1}(\tau))
    \sim \mathbf{V}+\mathcal{N}(0,\frac{1}{2} S^2 \sigma^2)
\end{equation}
which completes the proof.
\end{proof}

\subsection{Proof of Theorem \ref{theo:dpsingle}}
\label{app:proof_theo_3}

\noindent\textbf{Theorem 3.}
In Algorithm 1 (i.e., corresponding to our method overview), the output $\mathsf{FRe}(\mathbf{G})$ is $(\epsilon, \delta)$ differential private if we choose $\sigma$ to be $\sqrt{2\log (1.25/\delta)}/\epsilon$.  

\begin{proof}
    Suppose $\sigma=\sqrt{2\log (1.25/\delta)}/\epsilon$, then $\tilde{\mathbf{G}}$ (i.e., the vector after noise addition) follows a normal distribution with mean 0 and variance $2 S^2 \log (1.25/\delta)/\epsilon^2$, where $S$ is the sensitivity. Deriving from Theorem 3.22 in \cite{dwork2014algorithmic}, $\tilde{\mathbf{G}}$ satisfies $(\epsilon, \delta)$-DP. Finally, we then adopt the IFFT and imaginary part deletion to $\tilde{\mathbf{G}}$, which are all post-processing procedures, i.e., the computation is independent of the data. Thus, the result $\mathsf{GRe}(\mathbf{G})$ satisfies $(\epsilon, \delta)$-DP.
\end{proof}

\subsection{Proof of Corollary \ref{coro:conv}}
\label{app:proof_coro_1}

\noindent\textbf{Corollary 1.}
Suppose $\Gamma =\{\gamma_{m,n}\}_{m,n \in [0,N-1]}$ is a matrix of noise vector in frequency domain where $\gamma_{m,n} \sim \mathcal{N}(0, \sigma^2 S^2)$. Then the elements in $\mathcal{R}(\mathcal{F}^{-1}(\Gamma))$ follows a normal distribution $\mathcal{N}(0, \frac{1}{2}\sigma^2 S^2)$.

\begin{proof}
    Denote $\Upsilon=\{\upsilon_{k,l}\}_{k,l \in [0, N-1]}$ as the corresponding vector of $\Gamma$ in the time domain, such that the $(k,l)$-th element in $\mathcal{F}^{-1}(\Gamma)$ is
    \begin{equation}
    \begin{aligned}
        &\upsilon_{k,l} = \frac{1}{N} \sum_{m=0}^N \sum_{n=0}^N \gamma_{m,n} \cdot e^{\frac{2\pi i}{N} \cdot (mk + nl)} \\
        &= \frac{1}{N} \sum_{m=0}^N \sum_{n=0}^N \gamma_{m,n} \cdot \left( \cos (\frac{2\pi}{N}(mk +nl)) + i \sin (\frac{2\pi}{N}(mk +nl))\right)\\
        &\sim \mathcal{N} (0, \sum_{m=0}^N \sum_{n=0}^N \frac{1}{N^2} \cos^2 (\frac{2\pi}{N}(mk +nl)) \sigma^2 S^2) \\
        & + i \cdot \mathcal{N} (0, \sum_{m=0}^N \sum_{n=0}^N \frac{1}{N^2} \sin^2 (\frac{2\pi}{N}(mk +nl)) \sigma^2 S^2)
    \end{aligned}
    \end{equation}
    where the last $\sim$ derives from the properties of the normal distribution. Thus, we have
    \begin{equation}
    \begin{aligned}
        \mathcal{R}(\upsilon_{k,l}) &\sim \mathcal{N} (0, \sum_{m=0}^N \sum_{n=0}^N \frac{1}{N^2} \cos^2 (\frac{2\pi}{N}(mk +nl)) \sigma^2 S^2) \\
        &= \mathcal{N}(0, \frac{1}{2}\sigma^2 S^2)
    \end{aligned}
    \end{equation}
    which completes the proof. 
\end{proof}

\subsection{Proof of Corollary \ref{coro:train}}
\label{app:proof_coro_2}

\noindent\textbf{Corollary 2.} The DL training algorithm we proposed in Algorithm \ref{alg:trainingalgorithm} achieves $((T_e \cdot N/B)\epsilon + \frac{\log(1/\delta)}{\alpha-1}, \delta)-DP$ if $\sigma = \sqrt{2\log(1.25/\delta)}/\epsilon'$, where $\epsilon' = \epsilon + \frac{\log(1/\delta)}{\alpha-1}$.

\begin{proof}
    Following Theorem 3.22 \cite{dwork2014algorithmic}, the Gaussian mechanism in Algorithm \ref{alg:trainingalgorithm} is $(\epsilon', \delta)$ DP, which is $(\alpha, \epsilon)$-RDP due to Proposition \ref{pro:concate}. 
    Suppose the size of the dataset is $N$ with batch size $B$ and training epoch $T_e$. Thus there are in total $T_e \cdot \frac{N}{B}$ times of noise addition. Due to the property of the RDP mechanism, Algorithm 4 guarantees a $(\alpha, T_e \cdot \frac{N}{B} \epsilon)$-RDP, which is equivalent to a $(T_e \cdot \frac{N}{B} \epsilon + \frac{\log 1/\delta}{\alpha-1}, \delta)$-DP.
\end{proof}





\section{Detailed Model Structures}
\label{app:structure}

\subsection{Structure of LeNet-5} \label{app:stru:lenet5} The detailed model structure of LeNet-5 is shown in Table \ref{tab:arch:lenet}.

\begin{table}[]
    \caption{Architecture of LeNet-5 model.}
    \label{tab:arch:lenet}
    \begin{center}
    \begin{tabular}{c|c}
    \toprule
    Layer & Model Parameters \\
    \midrule
      Convolution   &  6 kernels of size $5\times 5$\\
      Max-pooling   & kernel size $2\times 2$\\
      Convolution & 6 kernels of size $5\times 5$\\
      Max-pooling & kernel size $2\times 2$\\
      Fully-connection $\times$3 & 120, 84, and 10 units, respectively\\ 
 \bottomrule
    \end{tabular}
    \end{center}
\end{table}

\subsection{Structure of ResNet-20} \label{app:stru:resnet20} The detailed structure of ResNet-20 is shown in Table \ref{tab:arch:resnet20}.

\subsection{Structure of Model-3}
\label{app:stru:model3}

The model structure of Model-3 is demonstrated in Table \ref{tab:arch:model3}.

\subsection{Structure of ResNet-18}
\label{app:stru:res18}

The model structure of ResNet-18 is shown in Table \ref{tab:arch:resnet18}.

\subsection{Structure of AlexNet}
\label{app:stru:alex}

The detailed model structure of AlexNet is shown in Table \ref{tab:arch:alexnet}.

\begin{table}[]
    \caption{Architecture of ResNet-20 model.}
    \label{tab:arch:resnet20}
    \begin{center}
    \begin{tabular}{c|c}
    \toprule
    Layer & Model Parameters \\
    \midrule
      Convolution   &  7 kernels of size $3\times 3$\\
      Avg-pooling   & kernel size $2\times 2$\\
      Convolution & 6 kernels of size $3\times 3$\\
      Avg-pooling & kernel size $2\times 2$\\
      Convolution & 6 kernels of size $3\times 3$\\
      Avg-pooling & kernel size $2\times 2$\\
      Fully-connection & 64 units\\ 
\bottomrule
    \end{tabular}
    \end{center}
\end{table}

\begin{table}[]
    \caption{Architecture of Model-3 model.}
    \label{tab:arch:model3}
    \begin{center}
    \begin{tabular}{c|c}
    \toprule
    Layer & Model Parameters \\
    \midrule
      Convolution   &  2 kernels of size $3\times 3$\\
      Max-pooling   & kernel size $2\times 2$\\
      Convolution & 2 kernels of size $3\times 3$\\
      Max-pooling & kernel size $2\times 2$\\
        Convolution & 2 kernels of size $3\times 3$\\
      Max-pooling & kernel size $2\times 2$\\
      
      Fully-connection $\times$2 & 120 and 10 units, respectively\\ 
\bottomrule
    \end{tabular}
    \end{center}
\end{table}

\begin{table}[]
    \caption{Architecture of ResNet-18 model.}
    \label{tab:arch:resnet18}
    \begin{center}
    \begin{tabular}{c|c}
    \toprule
    Layer & Model Parameters \\
    \midrule
      Convolution   &  1 kernels of size $7\times 7$\\
      Max-pooling   & kernel size $3\times 3$\\
      Convolution & 4 kernels of size $3\times 3$\\
      Max-pooling & kernel size $2\times 2$\\
      Convolution & 4 kernels of size $3\times 3$\\
      Max-pooling & kernel size $2\times 2$\\
      Convolution & 4 kernels of size $3\times 3$\\
      Max-pooling & kernel size $2\times 2$\\
      Convolution & 4 kernels of size $3\times 3$\\
      Max-pooling & kernel size $2\times 2$\\
      
      Fully-connection & 10 units\\ 
\bottomrule
    \end{tabular}
    \end{center}
\end{table}

\begin{table}[]
\caption{Architecture of AlexNet model.}
    \label{tab:arch:alexnet}
    \begin{center}
    \begin{tabular}{c|c}
    \toprule
    Layer & Model Parameters \\
    \midrule
      Convolution   &  1 kernels of size $3\times 3$\\
      Max-pooling   & kernel size $2\times 2$\\
      Convolution & 1 kernels of size $3\times 3$\\
      Max-pooling & kernel size $2\times 2$\\
      Convolution & 1 kernels of size $3\times 3$\\
      Max-pooling & kernel size $2\times 2$\\
      Convolution & 1 kernels of size $3\times 3$\\
      Max-pooling & kernel size $2\times 2$\\
      Convolution & 1 kernels of size $3\times 3$\\
      Max-pooling & kernel size $2\times 2$\\
      
      Fully-connection $\times$3 & 120, 84, and 10 units, respectively\\ 
\bottomrule
    \end{tabular}
    \end{center}
\end{table}

\end{document}